\numberwithin{equation}{section}
\theoremstyle{plain}
\newtheorem{theorem}{Theorem}
\theoremstyle{definition}
\theoremstyle{remark}
\newtheorem*{remark}{Remark}
\renewcommand{\boldsymbol}[1]{{#1}} 
\newcommand{\X}{{\boldsymbol{X}}}
\newcommand{\bsigma}{{\boldsymbol{\Sigma}}}
\newcommand{\bO}{{\boldsymbol{\Omega}}}
\newcommand{\bI}{{\boldsymbol{I}}}
\newcommand{\bJ}{{\boldsymbol{J}}}
\newcommand{\bK}{{\boldsymbol{K}}}
\DeclareMathOperator{\tr}{tr}
\newcommand{\be}{\begin{equation}}
	\newcommand{\ee}{\end{equation}}
\newcommand{\ba}{\begin{eqnarray}}
	\newcommand{\ea}{\end{eqnarray}}
\newcommand{\bsp}{\begin{split}}
	\newcommand{\esp}{\end{split}}
\title{Robust Density Power Divergence Estimates for Panel Data Models}
\author{Abhijit Mandal\\
	Department of Mathematical Sciences, Univeristy of Texas at El Paso, El Paso, U.S.A.  \\
	Beste Hamiye Beyaztas\\
	Department of Statistics, Istanbul Medeniyet University, Istanbul, Turkey\\
	Soutir Bandyopadhyay\\
	Department of Applied Mathematics, Statistics
	Colorado School of Mines, Denver, U.S.A. }
\begin{document}
	
	\maketitle

	\begin{abstract}
		The panel data regression models have become one of the most widely applied statistical approaches in different fields of research, including social, behavioral, environmental sciences, and econometrics. However, traditional least-squares-based techniques frequently used for panel data models are vulnerable to the adverse effects of the data contamination or outlying observations that may result in  biased and inefficient estimates and misleading statistical inference. In this study, we propose a {\it minimum density power divergence} estimation procedure for panel data regression models with random effects to achieve robustness against outliers. The robustness, as well as the asymptotic properties of the proposed estimator, are rigorously established.  The finite-sample properties of the proposed method are investigated through an extensive simulation study and an application to climate data in Oman. Our results demonstrate that the proposed estimator exhibits improved performance over some traditional and robust methods in the presence of data contamination.
	\end{abstract}
	
	\begin{keywords}
		Minimum density power divergence; Panel data; Random effects; Robust estimation.
	\end{keywords}

	\section{Introduction}\label{sec:1}

	The advancements in applied and methodological researches on panel data have been growing remarkably since the seminal paper of \cite{Balestra1966}. Panel data, sometimes referred to as longitudinal data, is multi-dimensional data consisting of observations collected over a period of time on the same set of cross-sectional units. This multi-dimensionality provides more information than either pure cross-sectional or pure time-series data since its grouping structure allows to control the individual-specific heterogeneity and the intra-individual dynamics  \citep{Baltagi2005, Hsiao1985}. Some of the advantages of using this type of data include more variability, more degrees of freedom, more efficiency and less collinearity between the variables, and better capability to identify and measure the effects which are not entirely detectable in a single cross-sectional or time-series data \citep{Baltagi2005}. Moreover, one of its attractive features is that more informative data results in reliable statistical inference by improving the accuracy and precision in estimating model parameters \citep{Beyaztas2020}.

	The main factors affecting this excessive growth in panel data studies can be summarized as availability of panel data, greater ability for reflecting the complexity of human behavior than a single cross-sectional or time-series data, and challenging methodology  \citep{Hsiao2007}. In general, collecting panel data is more problematic than collecting the cross-sectional or time-series data since such data consists of a large number of observations. However,  with the latest advances of information technology and systems in different fields of applications, generating and storing high throughput data have become much easier in recent times.

	Due to the systematic differences across cross-sectional units, all pertinent information may not always be captured in regressions using aggregated time-series and pure cross-section data, and omission of such information may further lead to biased and inconsistent statistical inference results
	\citep{Jirata2014}. To this end, panel data models have become one of the cornerstone approaches in empirical research in fields of economics, social sciences, and medical sciences because these models allow to explain the individual behavior over time while capturing the inter-temporal dynamics. One of the most conspicuous among several attractive features of panel data models is in their capability to account for the unobserved individual-specific heterogeneity. In this context, the fixed and random effects models are the most commonly used panel data regression models including the individual-specific components. In the fixed-effects model, the unobserved heterogeneity across individuals captured by the time-invariant intercept terms is allowed to be correlated with the explanatory variables and assumed to be fixed. On the other hand, those individual effects are treated as a part of the disturbance term and controlled by the differences in the error variance components in the random-effects model. A key assumption that distinguishes these two principle models is that the individual-specific effects are assumed to be independently distributed of the explanatory variables in the random-effects model while fixed effects models allow for a limited form of endogeneity \citep{Mundlak1978}.
	For a more detailed exposition of the research on technical details of linear panel data models and its applications, see 
	\cite{Wallace1969,Maddala1973,Mundlak1978,Laird1982,Cox2002,Diggle2002,Fitzmaurice2004,Gardiner2009} and \cite{athey2021matrix}.

	Typically, panel data allows us to exploit different sources of variation: (i) variation within cross-sectional units (\textit{within variation}), (ii) variation between cross-sectional units (\textit{between variation}), and (iii) variation over both time and cross-sectional units (\textit{overall variation}), and the estimation techniques differ depending on the source of variation \citep{Kennedy2003}. For example, within estimator (also called \textit{fixed-effects estimator}) utilizes within variation while the generalized least squares (GLS) estimators  used for random effects models take into account both within and between variations. The majority of the regression techniques rely on using the least-squares (LS) based estimators for statistical inferences about the parameters of linear panel data models \citep{Aquaro2013}. However, obtaining consistent estimates of model parameters in traditional estimation techniques depends on some restrictive assumptions that may not be achieved in practice. Those assumptions such as normality and homoscedasticity of the error terms and strict exogeneity with respect to the error terms make the LS-based methods vulnerable to the adverse effects of the outliers and data contamination (\citealp{Greene2017,Kutner2004,Visek2015}). In panel data models, different types of outliers depending on the sources of contamination, i.e.,  vertical outliers, horizontal outliers (in the error term) and leverage points (in the explanatory variables), can arise because of the measurement error, typing error, transmission or copying error and naturally unusual data points (\citealp{Rousseeuw1990,Rousseeuw2003,Maronna2006,Bramati2007,Bakar2015}). Moreover, the outliers may occur in a block form (called as \textit{block-concentrated outliers}) such that most of the outlying observations tend to be concentrated within cross-sectional units, i.e., in a few time-series \citep{Bramati2007}. As a consequence, the LS-based approaches such as the ordinary least squares (OLS) estimator, GLS estimator, etc. may lead to substantial degradation of accuracy and erroneous estimates due to the sensitivity to outliers and/or any departure from the model assumptions. Furthermore, the outlying observations may not be determined by looking at the LS residuals or using standard outlier diagnostics due to the potential vulnerability of the complex nature of panel data to the masking effect. Although it is more crucial to have a robust method in the context of panel data models especially in presence of the contaminated datasets, most of the efforts have been devoted to the development of robust techniques for linear regression models and the existing literature for static panel data models cover only a few approaches. Some of those robust techniques as alternatives to the fixed effects estimator have been developed by \cite{Bramati2007} by considering high breakdown point of the well-known robust regression estimators, namely, least trimmed squares (LTS) estimator of \cite{Rousseeuw1984} and MS estimates of \cite{Maronna2000}. Also, a robust estimation procedure of \cite{Aquaro2013} in the context of linear panel data regression models with fixed effects includes to use of two different data transformations and applying the efficient weighted LS estimator of \cite{Gervini2002} and the reweighted LTS estimator of \cite{Cizek2010} on the transformed data. Moreover, for fixed effects panel data models, the robust estimators proposed by \cite{Bakar2015} have been developed by incorporating MM-centering procedure and the within-group generalized M-estimator (WGM) of \cite{Bramati2007}. For the estimation of both fixed effects and random effects regression models, \cite{Visek2015} has proposed a least weighted squares method based on mean-centering data while a weighted least squares technique using MM-estimate of location has been developed by \cite{Midi2018}. More recently, \cite{Beyaztas2020} have proposed robust versions of the OLS-based estimation procedures by using weighted likelihood estimating equations methodology within the panel data regression models framework. 

	In this study, we estimate model parameters using the density power divergence (DPD) based M-estimator proposed by \cite{MR1665873}. The main advantage of the proposed approach is that it balances the desired efficiency and the robustness of the estimators by controlling a tuning parameter. We also propose an adaptive method of choosing the tuning parameter, where there is no prior knowledge of outliers. We consider a linear panel data regression model with random effects as there are very few robust methods for the random-effects model. It is worth mentioning that, the estimation procedure for the fixed-effects model is relatively simple, and the theoretical properties are straightforward from the current work.
	
	The rest of the paper is organized as follows. We introduce the linear panel data model in Section \ref{sec:panel_data_model}. In Section \ref{sec:dpd}, we describe the density power divergence measure and the corresponding estimator for the linear panel data model. The theoretical properties, including the asymptotic distribution and the influence function of the MDPDE, are presented in Section \ref{sec:asymp}. We also propose a method to select the optimum DPD parameter by minimizing the asymptotic mean square error. Section \ref{Sec:Numerical Results} illustrates an extensive simulation study based on the proposed method and compares the results with the traditional techniques. The numerical results are further supported through a real data example from weather stations in Oman. Some concluding remarks are given in Section \ref{sec:conclusions} and the proofs of the technical results are shown in the Appendix.    
	
	\section{Linear Panel Data Models} \label{sec:panel_data_model}
	Let us consider the linear panel data regression model with a random sample as follows
	\begin{equation}
		y_{it} = x_{it}^T \beta + \alpha_i + \varepsilon_{it}, \ \  i = 1, 2, \cdots, N; \ \ t = 1, 2, \cdots, T,
		\label{panel_data_model}
	\end{equation}
	where the subscript $i$ represents an individual observed at time $t$. Here $\alpha_i$'s are the unobserved individual-specific effects (time-invariant characteristics), $\beta$ is a $K \times 1$ vector of regression coefficients and an element of the parameter space $\Theta$, $y_{it}$ and $x_{it}$'s are the response variable and the $K$-dimensional vector of explanatory variables, respectively and $\varepsilon_{it}$'s are the independent and identically distributed (iid) error terms with $E \left( \varepsilon_{it} \vert x_{i1}, \ldots, x_{iT}, \alpha_i \right) = 0$, $E \left( \varepsilon_{it}^2 \vert x_{i1}, \ldots, x_{iT}, \alpha_i \right) = \sigma_{\varepsilon}^2$ and $E \left( \varepsilon_{it} \varepsilon_{is} \vert x_{i1}, \ldots, x_{iT}, \alpha_i \right) = 0$ for $t \neq s$. The above panel data regression model can be represented in matrix form as 
	\begin{equation} 
		y = \alpha \otimes e_T + \X \beta + \varepsilon,
	\end{equation}
	where $y = \left( y_{1}, \cdots, y_{N} \right)^T$ is an $NT \times 1$ vector obtained by stacking observations $y_i = \left( y_{i1}, \cdots, y_{iT} \right)^T$ for individual $i=1,\cdots,N$. The $NT \times K$ matrix $X = \left( x_1, \cdots, x_N \right)^T$ is formed with regressors  $x_i = \left( x_{i1}^T, \cdots, x_{iT}^T \right)^T$, $\alpha$ is an $N \times 1$ vector consisting of the individual effects $\alpha_i$, $e_T$ is a $T \times 1$ vector of ones and $\otimes$ denotes the kronecker product. 
	
	If $\alpha_i$ is assumed to be random, then the random effects model can be succinctly written as

	\begin{equation} \label{Eq:3}
		y_{it} = x_{it}^T \beta + \alpha_i + \varepsilon_{it} = x_{it}^T \beta + \nu_{it},~~\alpha_i \sim \mathrm{iid}(0, \sigma_{\alpha}^2),~~\varepsilon_{it} \sim \mathrm{iid}(0, \sigma_{\varepsilon}^2),
	\end{equation}
	where $\nu_{it} = \alpha_i + \varepsilon_{it}$ denotes a compound error term with $\sigma_{\nu}^2 = \sigma_{\alpha}^2 + \sigma_{\varepsilon}^2$ and ${\rm cov}\left( \nu_{it},\nu_{is}\right) = \sigma_{\alpha}^2$ for $t \neq s$. $\alpha_i$'s are assumed to be uncorrelated with $\varepsilon_{it}$ and $x_{it}$. We further assume that $\alpha_i$ and $\epsilon_{it}$ are normally distributed for all $i$ and $t$.

	\section{Density Power Divergence} \label{sec:dpd}
	Let us consider a family of models $\{F_\theta, \theta \in \Theta_0\}$ with  density $f_\theta$. We denote $\mathcal{G}$ as the class of all distributions having densities with respect to Lebesgue measure. Suppose $G \in \mathcal{G}$ is the true distribution with density $g$. Then, the density power divergence (DPD) measure between the model density $f_\theta$ and the true density $g$ is defined as
	\begin{equation} 
		d_\gamma(f_\theta, g) = 
		\left\{
		\begin{array}{ll}
			\displaystyle{\int_y\left\{ f^{1+\gamma}_\theta(y)-\left( 1+\frac{1}{\gamma}\right) f^{\gamma }_\theta(y)g(y)+%
				\frac{1}{\gamma}g^{1+\gamma}(y)\right\} dy}, & \text{for}\mathrm{~}\gamma>0, \\%
			[2ex]
			\displaystyle{\int_y g(y)\log\left( \displaystyle\frac{g(y)}{f_\theta(y)}\right) dy,} & \text{for}%
			\mathrm{~}\gamma=0,%
		\end{array}
		\right. 
		\label{dpd}
	\end{equation}
	where $\gamma$ is a tuning parameter \citep{MR1665873}. Note that, $G$ is not necessarily a member of the model family $F_\theta$. Further, for $\gamma=0$, the DPD measure is obtained as a limiting case of $\gamma \rightarrow 0^+$, and  is same as the  Kullback-Leibler (KL) divergence. 
	Generally, given a parametric model, we estimate $\theta$ by minimizing the DPD measure with respect to $\theta$ over its parametric space $\Theta_0$. We call the estimator  the {\it minimum power divergence estimator} (MDPDE). 
	It is well-known that, for $\gamma=0$, minimization of the KL-divergent is equivalent to maximization of the log-likelihood function. Thus, the MLE can be considered as a special case of the MDPDE when $\gamma=0$. 
	
	Let $\theta = (\beta^T, \sigma_\alpha^2, \sigma_\epsilon^2)^T$ denote the parameter of the random effects model given in Equation (\ref{Eq:3}). We define the conditional probability density for disturbance terms, $\varepsilon_i + \alpha_i e_T  = y_i - x_i \beta $ as 
	\begin{equation}
		f_\theta(y_i | x_i) = (2\pi)^{-\frac{T}{2}} |\bO |^{-\frac{1}{2}} \exp \left\{ -\frac{1}{2} (y_i - x_i \beta)^T \bO^{-1} (y_i - x_i \beta)
		\right\},
	\end{equation}
	where  
	\begin{equation}
		\bO = E \left( \nu_i \nu_i^T \right) = \sigma_{\varepsilon}^2 \bI_T + \sigma_{\alpha}^2 e_T e_T^T 
		\label{omega}
	\end{equation}
	and  $\bI_T$ is the identity matrix of dimension $T$.
	In this case, we introduce the DPD measure based on the conditional density $f_\theta(y|x)$   as
	\begin{equation} 
		d_\gamma(f_\theta, g) = 
		\left\{
		\begin{array}{ll}
			\displaystyle{\int_x	\int_y\left\{ f^{1+\gamma}_\theta(y|x)-\left( 1+\frac{1}{\gamma}\right) f^{\gamma }_\theta(y|x)g(y|x)+%
				\frac{1}{\gamma}g^{1+\gamma}(y|x)\right\} h(x) dx dy,} & \text{for}\mathrm{~}\gamma>0, \\%
			[2ex]
			\displaystyle{\int_x \int_y g(y|x)\log\left( \displaystyle\frac{g(y|x)}{f_\theta(y|x)}\right) h(x) dx dy,} & \text{for}%
			\mathrm{~}\gamma=0,%
		\end{array}
		\right. 
		\label{dpd1}
	\end{equation}
	where $h(x)$ is the marginal probability density function of $X$ and $g(y|x)$ is the true conditional density of $Y$ given $X$.

	For $\gamma>0$, the DPD measure can empirically be written as
	\begin{equation} 
		\widehat{d}_\gamma(f_\theta, g) = 
		\frac{1}{N} \sum_{i=1}^{N}	\int_y  f^{1+\gamma}_\theta(y|x_i) dy - \frac{1+\gamma}{N \gamma} \sum_{i=1}^{N} f_\theta^{\gamma}(y_i | x_i) + c(\gamma),
		\label{dpd_emp}
	\end{equation}
	where $c(\gamma) = \frac{1}{N\gamma} \sum_{i=1}^N \int_y g_i^{1+\gamma}(y|x_i) dy$ does not depend on $\theta$.   Equation \eqref{dpd_emp} simplifies to
	\begin{equation}\label{dpd_est}
		\widehat{d}_\gamma(f_\theta, g) 
		= (2\pi)^{-\frac{T \gamma}{2}}  |\bO |^{-\frac{\gamma}{2}} (1+\gamma)^{-\frac{1}{2}} \left[ 1 - \frac{(1+\gamma)^{3/2} }{N\gamma}   \sum_{i=1}^{N}\exp\left[-\frac{\gamma}{2} B_i\right] \right]  + c(\gamma),
	\end{equation}	
	where $B_i = (y_i - x_i \beta)^T \bO^{-1} (y_i - x_i \beta)$. Using the Sherman--Morrison formula, we get
	\begin{equation}
		\bO^{-1} =  \frac{1}{\sigma_\epsilon^2} \bI_T - \frac{\sigma_\alpha^2 e_T e_T^T} {\sigma_\epsilon^2 ( \sigma_\epsilon^2 + T \sigma_\alpha^2 )},\ \ \
		|\bO|  = \sigma_\epsilon^{2(T-1)} ( \sigma_\epsilon^2 + T \sigma_\alpha^2 ).
		\label{inverse}
	\end{equation}
	It further simplifies $B_i$ as follows
	\begin{equation}
		B_i =    \frac{1}{\sigma_\epsilon^2}\sum_{t=1}^{T} (y_{it} - x_{it} \beta)^2 - \frac{\sigma_\alpha^2 } {\sigma_\epsilon^2 ( \sigma_\epsilon^2 + T \sigma_\alpha^2 )} \left\{ \sum_{t=1}^{T} (y_{it} - x_{it} \beta) \right\}^2 .
	\end{equation}
	
	The MDPDE of $\theta$ is then obtained by minimizing $	\widehat{d}_\gamma(f_\theta, g)$ over $\theta \in \Theta_0$. Note that, if the $i$-th observation is an outlier, then the value of $f_\theta(y_i | x_i)$ is very small compared to other samples. In that case, the second term of Equation \eqref{dpd_est} is negligible when  $\gamma>0$, thus the corresponding MDPDE becomes robust against outlier. On the other hand, when $ \gamma=0$, the KL divergent can be written as $\widehat{d}_\gamma(f_\theta, g) = -\sum_{i=1}^{N}\log f_\theta(y_i | x_i) +d$, where $d$ is independent of $\theta$. For an outlying observation, the KL divergenence measure diverges as $f_\theta(y_i | x_i) \rightarrow 0$. Therefore, the MLE breaks down in the presence of outliers as they dominate the loss function. In fact, the tuning parameter $\gamma$ controls the trade-off between  efficiency and robustness of the MDPDE -- robustness measure increases if $\gamma$ increases, but at the same time efficiency  decreases.

	Let $ \bar{x}_i = \frac{1}{T}\sum_{t=1}^T x_{it}$. The estimating equations of $\theta$ is then obtained from equation $\frac{\partial}{\partial \theta} \widehat{d}_\gamma(f_\theta, g) =0$ and they can be simplified as
	\begin{equation}
		\begin{split}
			&\sum_{i=1}^{N} \sum_{t=1}^{T}  x_{it} (y_{it} - x_{it} \beta)  \exp\left[-\frac{\gamma}{2} B_i\right]  = \frac{T \sigma_\alpha^2 } {( \sigma_\epsilon^2 + T \sigma_\alpha^2 )}  \sum_{i=1}^{N} \sum_{t=1}^{T} \bar{x}_i (y_{it} - x_{it} \beta) \exp\left[-\frac{\gamma}{2} B_i\right],\\
			& \gamma T |\bO |^{-1} \sigma_\epsilon^{2(T-1)}  \left\{  (1+\gamma)^{-\frac{1}{2}} - \frac{1+\gamma}{N \gamma}   \sum_{i=1}^{N}\exp\left[-\frac{\gamma}{2} B_i\right] \right\} \\
			& \hspace{1in} = -  \frac{(1+\gamma)}{N ( \sigma_\epsilon^2 + T \sigma_\alpha^2 )^2}   \sum_{i=1}^{N}\exp\left[-\frac{\gamma}{2} B_i\right]  \left\{ \sum_{t=1}^{T} (y_{it} - x_{it} \beta) \right\}^2 ,\\
			& \gamma T |\bO |^{-1}  \sigma_\epsilon^{2(T-2)} \left\{  \sigma_\epsilon^2 +  (T-1) \sigma_\alpha^2 \right\}   \left\{  (1+\gamma)^{-\frac{1}{2}} - \frac{1+\gamma}{N \gamma}   \sum_{i=1}^{N}\exp\left[-\frac{\gamma}{2} B_i\right] \right\} \\
			& \ \ =  \frac{(1+\gamma) }{N}   \sum_{i=1}^{N}\exp\left[-\frac{\gamma}{2} B_i\right]  \left[- \frac{1}{\sigma_\epsilon^4}\sum_{t=1}^{T} (y_{it} - x_{it} \beta)^2 + \frac{\sigma_\alpha^2 (2\sigma_\epsilon^2 + T \sigma_\alpha^2 )} {\sigma_\epsilon^4 ( \sigma_\epsilon^2 + T \sigma_\alpha^2 )^2} \left\{ \sum_{t=1}^{T} (y_{it} - x_{it} \beta) \right\}^2 \right].
		\end{split}	
	\end{equation}
	The MDPDE of $\theta$ is obtained by solving the above system of equations. One may use an iterative algorithm for this purpose or directly minimize the DPD measure in Equation \eqref{dpd_emp} with respect to $\theta \in \Theta_0$.

	\section{Asymptotic Distribution of the MDPDE} \label{sec:asymp}
	In this section, we present the asymptotic distribution of the MDPDE, when the data generating distribution $G(y|x)$ is not necessarily in the model family. Let us define the score function as $u_\theta(y_i | x_i) =		\frac{\partial}{\partial \theta} \log f_\theta(y_i | x_i)$. We write 
	\begin{equation}
		u_\theta(y_i | x_i) =	(u^T_\beta(y_i | x_i), u^T_{\sigma_\alpha^2}(y_i | x_i),  u^T_{\sigma_\epsilon^2}(y_i | x_i))^T. \label{u_score}
	\end{equation}
	Standard calculations show  that
	\begin{equation}
		\begin{split}
			u_\beta(y_i | x_i) &=		\frac{\partial}{\partial \beta} \log f_\theta(y_i | x_i) = 	 \frac{1}{\sigma_\epsilon^2}\sum_{t=1}^{T} x_{it} (y_{it} - x_{it} \beta) - \frac{ T \bar{x}_i \sigma_\alpha^2 } {\sigma_\epsilon^2 ( \sigma_\epsilon^2 + T \sigma_\alpha^2 )}  \sum_{t=1}^{T} (y_{it} - x_{it} \beta),\\
			u_{\sigma_\alpha^2}(y_i | x_i) &=		\frac{\partial}{\partial \sigma_\alpha^2} \log f_\theta(y_i | x_i)
			= 	 -\frac{T}{2  ( \sigma_\epsilon^2 + T \sigma_\alpha^2 )} 
			+ \frac{ 1} {2( \sigma_\epsilon^2 + T \sigma_\alpha^2 )^2} \left\{ \sum_{t=1}^{T} (y_{it} - x_{it} \beta) \right\}^2,\\
			u_{\sigma_\epsilon^2}(y_i | x_i) &=		\frac{\partial}{\partial \sigma_\epsilon^2} \log f_\theta(y_i | x_i) 
			= 	 -\frac{1}{2  ( \sigma_\epsilon^2 + T \sigma_\alpha^2 )} \left[ (T-1) \sigma_\epsilon^{2} ( \sigma_\epsilon^2 + T \sigma_\alpha^2 ) +  1\right]	\\
			& \hspace{1in} + \frac{1}{2\sigma_\epsilon^4}\sum_{t=1}^{T} (y_{it} - x_{it} \beta)^2 - \frac{\sigma_\alpha^2 (2\sigma_\epsilon^2 + T \sigma_\alpha^2 )} {2\sigma_\epsilon^4 ( \sigma_\epsilon^2 + T \sigma_\alpha^2 )^2} \left\{ \sum_{t=1}^{T} (y_{it} - x_{it} \beta) \right\}^2.
			\label{score_functions}
		\end{split} 
	\end{equation}
	For $i=1,2, \cdots, N$, we define
	\begin{equation}
		\begin{split}
			&\bJ^{(i)} = \int_y 	u_\theta(y | x_i) u^T_\theta(y | x_i) f_\theta^{1+\gamma}(y | x_i) dy + \int_y \Big\{ \bI_\theta(y | x_i) - \gamma 	u_\theta(y | x_i) u^T_\theta(y | x_i)\Big\} \Big\{ 	g(y | x_i) - f_\theta(y | x_i) \Big\}f_\theta^{\gamma}(y | x_i) dy,\\
			&\bK^{(i)} = \int_y 	u_\theta(y | x_i) u^T_\theta(y | x_i) f_\theta^{2\gamma}(y | x_i) g(y | x_i) dy - \xi^{(i)}  \xi^{(i)T} ,\\
			&\bI_\theta(y | x_i) = -\frac{\partial}{\partial \theta} u_\theta(y | x_i), \hspace{.4in} \xi^{(i)}  = \int_y u_\theta(y | x_i) f_\theta^{\gamma}(y | x_i) g(y | x_i) dy.
		\end{split}
	\end{equation}
	We further define $\bJ =  \lim_{N \rightarrow \infty} \frac{1}{N} \sum_{i=1}^{N} \bJ^{(i)}$, $\bK =  \lim_{N \rightarrow \infty} \frac{1}{N} \sum_{i=1}^{N} \bK^{(i)}$. For the asymptotic distribution of the MDPDE, we need the following assumptions:
	\begin{enumerate}
		\item[(A1)] The true density $g(y|x)$ is supported over the entire real line $\mathbb{R}$.
		
		\item[(A2)] There is an open subset $\omega \in \Theta_0$ containing the best fitting parameter $\theta$ such tat $\bJ$ is positive definite for all $\theta \in \omega$.
		
		\item[(A3)] There exist functions $M_{jkl}(x, y)$ such tat $|\partial^3 \exp[(y - x \beta)^T \bO^{-1} (y - x \beta)] /\partial \theta_j \partial \theta_k \partial \theta_l | \leq M_{jkl}(x, y)$ for all $\theta \in \omega$, where $\int_x \int_y |M_{jkl}(x, y)| g(y|x) h(x) dy dx < \infty$ for all $j, k$ and $l$. 
	\end{enumerate}
	\begin{theorem}
		\label{theorem:asymp}
		Under the regularity conditions (A1)--(A3), with probability tending to 1 as $n \rightarrow \infty$, there exists $\widehat{\theta}$, such that 
		\begin{enumerate}
			\item[(i)] $\widehat{\theta}$ is consistent for $\theta$, and
			\item[(ii)] the asymptotic distribution of $\widehat{\theta}$ is given by 
			\begin{equation}
				\sqrt{N}(\widehat{\theta} - \theta) \sim N_{K+2}(0, \bJ^{-1} \bK \bJ^{-1}).
			\end{equation}
		\end{enumerate}
	\end{theorem}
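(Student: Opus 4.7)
The plan is to treat the MDPDE as a standard M-estimator defined through the estimating equations derived in Section~\ref{sec:dpd}, then combine a Cramér–Lehmann style consistency argument with a Taylor expansion to obtain asymptotic normality. Writing the empirical DPD objective as $\widehat{d}_\gamma(f_\theta,g)=\frac{1}{N}\sum_{i=1}^N V_\theta(y_i,x_i)+c(\gamma)$, the per-block contributions $V_\theta(y_i,x_i)$ are independent across $i$ (though not identically distributed, since the $x_i$ enter as covariates), and differentiation gives $\psi_\theta(y_i,x_i):=\tfrac{\partial}{\partial\theta}V_\theta(y_i,x_i)=-(1+\gamma)\,u_\theta(y_i\mid x_i)f_\theta^\gamma(y_i\mid x_i)+(1+\gamma)\xi^{(i)}$. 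A direct calculation then identifies the average of $\mathrm{Var}(\psi_\theta(Y_i,x_i))$ with $\bK$ and the expected Hessian of $V_\theta$ with $\bJ$, so that the target sandwich matrix $\bJ^{-1}\bK\bJ^{-1}$ appears naturally.

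For part (i), I would follow the classical Cramér argument. Fix a small $a>0$ and consider the sphere $\{\theta':\|\theta'-\theta\|=a\}\subset\omega$. Expanding $\widehat{d}_\gamma(f_{\theta'},g)-\widehat{d}_\gamma(f_\theta,g)$ to third order in $\theta'-\theta$, the linear term has expectation zero by the definition of the best-fitting $\theta$, the quadratic term has expectation $\tfrac{1}{2}a^2 v^\top\bJ v$ (strictly positive by (A2)), and the cubic remainder is uniformly controlled by $\tfrac{1}{N}\sum_{i=1}^N M_{jkl}(x_i,y_i)$, whose limit is finite by (A3). A strong law of large numbers for independent summands then shows that with probability tending to one the empirical DPD strictly exceeds $\widehat{d}_\gamma(f_\theta,g)$ on the sphere, so a local minimizer $\widehat{\theta}$ exists in its interior, yielding a consistent root.

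For part (ii), I would Taylor-expand the estimating equation $\frac{1}{N}\sum_{i=1}^N\psi_{\widehat\theta}(y_i,x_i)=0$ around $\theta$ to obtain
\begin{equation}
\sqrt{N}(\widehat\theta-\theta)=\left[-\frac{1}{N}\sum_{i=1}^N\frac{\partial}{\partial\theta^\top}\psi_{\theta^*}(y_i,x_i)\right]^{-1}\frac{1}{\sqrt{N}}\sum_{i=1}^N\psi_\theta(y_i,x_i),
\end{equation}
for some $\theta^*$ between $\widehat\theta$ and $\theta$. Using (A3) and the consistency established in (i), the Hessian average converges in probability to $\bJ$. Each summand in the score sum is independent with mean zero (since $E[\psi_\theta(Y_i,x_i)\mid x_i]=\xi^{(i)}-\xi^{(i)}=0$ by construction) and with covariance matrices averaging to $\bK$, so the Lindeberg–Feller CLT yields that $N^{-1/2}\sum_{i=1}^N\psi_\theta(y_i,x_i)$ converges in distribution to $N_{K+2}(0,\bK)$. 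Slutsky's theorem then delivers the claimed sandwich limit.

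The main obstacle is the non-identically-distributed nature of the summands induced by the covariates together with the panel block structure of size $T$: the ordinary i.i.d.\ SLLN and CLT for the MDPDE of Basu et al.\ must be replaced by triangular-array versions, and the Lindeberg condition must be verified through a uniform-in-$i$ moment bound on $f_\theta^{2\gamma}(Y_i\mid x_i)\|u_\theta(Y_i\mid x_i)\|^2$. This uniform control is the one place where the compound-normal structure of $\nu_i=\alpha_i e_T+\varepsilon_i$ must be exploited carefully; however, the explicit Gaussian form of $f_\theta$ together with (A3) makes it available, and once it is secured the remaining steps are routine adaptations of the MDPDE regression theory.
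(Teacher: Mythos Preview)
Your proposal is correct and follows essentially the same route as the paper: the MDPDE is written as an M-estimator, consistency is obtained by the classical Cram\'er sphere argument (which the paper defers to Theorem~3.1 of Basu et al.), and asymptotic normality comes from a first-order Taylor expansion of the estimating equation, a WLLN for the Hessian giving $\bJ$, and a CLT for independent non-identically-distributed summands giving $\bK$, combined via Slutsky. One small slip: in your expression for $\psi_\theta$ the centering term should be $(1+\gamma)\int u_\theta(y\mid x_i)f_\theta^{1+\gamma}(y\mid x_i)\,dy$, not $(1+\gamma)\xi^{(i)}$ (the latter involves the unknown $g$); the two coincide only at the best-fitting $\theta$ by the population first-order condition, which is precisely what makes the score mean zero there---so your ``by construction'' remark is right in spirit but mislabeled.
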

	\begin{proof} 
		The proof of the theorem is given in Appendix. 
	\end{proof}
	
	Note that, if the true distribution $g(y|x)$ is a member of the model family $f_\theta(y|x)$ for some $\theta \in \Theta_0$, then
	\begin{equation}
		\begin{split}
			&\bJ^{(i)} = \int_y 	u_\theta(y | x_i) u^T_\theta(y | x_i) f_\theta^{1+\gamma}(y | x_i) dy, \label{J_i}\\
			&\bK^{(i)} = \int_y 	u_\theta(y | x_i) u^T_\theta(y | x_i) f_\theta^{2\gamma+1}(y | x_i) dy - \xi^{(i)}  \xi^{(i)T} , \\
			& \xi^{(i)}  = \int_y u_\theta(y | x_i) f_\theta^{\gamma+1}(y | x_i) dy. 
		\end{split}
	\end{equation}
	
	\noindent
	In this case, the symmetric matrix $\bJ^{(i)}$ can be partitioned as
	\begin{equation}
		\bJ^{(i)} = 
		\begin{bmatrix}
			\bJ_\beta^{(i)}  & 	\bJ_{\beta, \ \sigma_\alpha^2}^{(i)} &  	\bJ_{\beta, \ \sigma_\epsilon^2}^{(i)}\\
			
			. & \bJ_{\sigma_\alpha^2}^{(i)} & \bJ_{\sigma_\alpha^2, \ \sigma_\epsilon^2}^{(i)}\\
			. & . & \bJ_{\sigma_\epsilon^2}^{(i)} 
		\end{bmatrix}, \label{J_theta}
	\end{equation}
	and in Appendix \ref{appendix:J}, it is shown that
	\begin{equation}
		\begin{split}
			\bJ_\beta^{(i)} & = M \sigma_\epsilon^{-4} \Bigg[ \sigma_\epsilon^2 \sum_{t=1}^{T} x_{it} x_{it}^T   +  T^2 \sigma_\alpha^2 \left( \frac{ T \sigma_\alpha^2 }{( \sigma_\epsilon^2 + T \sigma_\alpha^2 )} -1 \right) \bar{x}_i \bar{x}_i^T  \Bigg] ,\\
			\bJ_{\sigma_\alpha^2}^{(i)} & = \frac{M T^2 ( \gamma^2 + 2 )} {4 (1+\gamma) ( \sigma_\epsilon^2 + T \sigma_\alpha^2 )^2} ,\\
			\bJ_{\sigma_\epsilon^2}^{(i)}  & =  \frac{  M T^2 (\gamma - 1) \left[  \sigma_\epsilon^2 + (T-1) \sigma_\alpha^2 \right]^2} {4\sigma_\epsilon^4  ( \sigma_\epsilon^2 + T \sigma_\alpha^2 )^2}  \\
			& \hspace{1cm} + \frac{MT}{4\sigma_\epsilon^8}   \left[(T+2)  \sigma_\epsilon^4  +  2(T+2)  \sigma_\epsilon^2\sigma_\alpha^2 + 3T \sigma_\alpha^4 \right]    + \frac{3MT^2\sigma_\alpha^4 (2\sigma_\epsilon^2 + T \sigma_\alpha^2 )^2} {4\sigma_\epsilon^8 (1+\gamma) ( \sigma_\epsilon^2 + T \sigma_\alpha^2 )^2}  \\
			& \hspace{1cm}  - \frac{ T M (1+\gamma)\sigma_\alpha^2 (2\sigma_\epsilon^2 + T \sigma_\alpha^2 )} {2\sigma_\epsilon^8 ( \sigma_\epsilon^2 + T \sigma_\alpha^2 )^2} \Big[ (T+2)  \sigma_\epsilon^4 + (T^2 + 2T +3) \sigma_\epsilon^2 \sigma_\alpha^2 + 3(T^2 -T + 1 ) \sigma_\alpha^4 \Big] ,\\
			\bJ_{\beta, \ \sigma_\alpha^2}^{(i)} & = 0,\ \
			\bJ_{\beta, \ \sigma_\epsilon^2}^{(i)}  = 0,\\
			\bJ_{\sigma_\alpha^2, \ \sigma_\epsilon^2}^{(i)} & = \frac{T M  (1+\gamma) }{4 \sigma_\epsilon^4 ( \sigma_\epsilon^2 + T \sigma_\alpha^2 )^2}  \Big[  2(T+1)  \sigma_\epsilon^4 + (2T^2  + T +3) \sigma_\epsilon^2 \sigma_\alpha^2   + 3(T^2 -T +1) \sigma_\alpha^4 \Big] \\
			& \hspace{1cm} - \frac{3MT^2\sigma_\alpha^2 (2\sigma_\epsilon^2 + T \sigma_\alpha^2 )} {4\sigma_\epsilon^4 (1+\gamma) ( \sigma_\epsilon^2 + T \sigma_\alpha^2 )^2}  	 - \frac{M T^2  \left[  (T-1) \sigma_\alpha^2 +  \sigma_\epsilon^2 \right]} {2\sigma_\epsilon^2 ( \sigma_\epsilon^2 + T \sigma_\alpha^2 )^2} ,
		\end{split}
	\end{equation}
	where 
	\begin{equation}
		M = (2\pi)^{-\frac{T \gamma}{2}}  (1+\gamma)^{-\frac{T+2}{2}}  \sigma_\epsilon^{-\gamma(T-1)} ( \sigma_\epsilon^2 + T \sigma_\alpha^2 )^{-\frac{\gamma}{2}}  . \label{M1}
	\end{equation} 
	
	\noindent
	Similarly, $\xi^{(i)}$ can be partitioned as
	$\xi^{(i)} = \left(\xi_\beta^{(i)T}, \xi_{\sigma_\alpha^2}^{(i)} , \xi_{\sigma_\epsilon^2}^{(i)} \right)^T$,
	and in Appendix \ref{appendix:xi}, it is shown that
	\begin{equation}
		\xi_\beta^{(i)} = 0,\ \ \
		\xi_{\sigma_\alpha^2}^{(i)}  = 	 -\frac{MT \gamma}{2  ( \sigma_\epsilon^2 + T \sigma_\alpha^2 )},\mbox{ and }
		\xi_{\sigma_\epsilon^2}^{(i)}  =  -\frac{M T \gamma \left[  \sigma_\epsilon^2 + (T-1) \sigma_\alpha^2 \right]}{2 \sigma_\epsilon^2 ( \sigma_\epsilon^2 + T \sigma_\alpha^2 ) }.
		\label{xi_i_model}
	\end{equation}
	Note that if we write the matrix $\bJ^{(i)}$ as a function of $\gamma$, i.e. $\bJ^{(i)} \equiv \bJ^{(i)}(\gamma)$, then we have
	\begin{equation}
		\bK^{(i)} = \bJ^{(i)}(2\gamma)   - \xi^{(i)}  \xi^{(i)T}.
	\end{equation}
	Moreover, $\xi_\beta^{(i)}$ is constant for all values of $i=1, 2, \cdots, N$. Therefore, $\bK$ can be written as
	\begin{equation}
		\bK =  \lim_{N \rightarrow \infty} \frac{1}{N} \sum_{i=1}^N \bJ^{(i)}(2\gamma)   -  \xi^{(i)}  \xi^{(i)T}.
	\end{equation}
	
	Further calculations show that the variance of each component of $\widehat{\beta}$ increases as $\gamma$ increases. Therefore, the efficiency of the MDPDE decreases as $\gamma$ increases -- the MLE being the most efficient estimator. However, our simulation studies show that the loss of efficiency is not severe.

	\subsection{Influence Function of the MDPDE}
	We further access the extent of the resistance to outliers of our proposed estimator using the influence function approach of \cite{MR829458}. It measures the rate of asymptotic bias of an estimator to infinitesimal contamination in the distribution. A bounded influence function suggests that the corresponding estimator is robust against extreme outliers. Note that, the MDPDE is an M-estimator \citep{MR606374} as the estimating equation can be written as $\sum_i \Psi_\theta(y_i | x_i) = 0$, where
	\begin{equation}
		\Psi_\theta(y_i | x_i) =  u_\theta(y_i | x_i) f_\theta^\gamma(y_i | x_i) - \int_y u_\theta(y | x_i) f_\theta^{1+\gamma}(y | x_i) dy.
	\end{equation}
	This is obtained by differentiating $\widehat{d}_\gamma(f_\theta, g)$ with respect to $\theta$ from Equation \eqref{dpd_emp}. Let $G(y|x)$ be the true conditional distribution function $Y$ given $X$  and $\theta = T_\gamma(G)$ be the functional for the MDPDE.  Following \cite{MR1665873}, the influence function of the MDPDE is given by
	\begin{equation}
		IF((x, y), T_\gamma, G) = \bJ^{-1} \left\{ u_\theta(y | x) f_\theta^\gamma(y | x) - \xi^{(i)} \right\},
	\end{equation}
	where  $\bJ$ is evaluated at the model when $g=f_\theta$, and $\xi^{(i)}$, given in Equation \eqref{xi_i_model}, is a fixed vector that  does not depend on index $i$.
	\begin{remark}
		Note that the score function $u_\theta(y | x)$ in Equation \eqref{u_score} is unbounded in both $x$ and $y$. As a result, the influence function of the MLE, i.e., the MDPDE with $\gamma=0$, is unbounded  for the panel data regression model.  On the other hand,  $u_\theta(y | x) f_\theta^\gamma(y | x)$ is bounded in $y$ when $\gamma>0$ as the corresponding terms can be written as $y\exp(y^2)$. So, the influence function of the MDPDE of $\theta$ is bounded in $y$ when $\gamma>0$. Moreover, $IF((x, y), T_\gamma, G)$ tends to zero as $|y| \rightarrow \infty$, indicating a redescending effect for large vertical outliers. The higher the value of $\gamma$, the larger the down-weighting effect to the outliers. However, this influence function could still be unbounded in $x$, but only for small values of $|y|$ and simultaneously large values of $||x||$. This implies that good leverage points have the strongest effect on the MDPDE. But, if the leverage points are also vertical outliers, then the MDPDE will not be sensitive to those observations.
	\end{remark}
	
	\subsection{Choice of the Optimum $\gamma$} \label{sec:opt_dpd}
	One important use of the asymptotic distribution of the penalized MDPDE is in selecting the optimum value of the DPD parameter $\gamma$. 
	In practice,  $\gamma$ is chosen by the user depending on the desired level of robustness measure at the cost of efficiency. Alternatively, we may select a data-driven optimum $\gamma$. Following  \cite{Warjones}, we minimize  the  mean square error (MSE) of $\widehat{\beta}$ to obtain the optimum value of $\gamma$ adaptively. Suppose $\Sigma_\beta$ the asymptotic variance of $\beta$ obtained from Theorem \ref{theorem:asymp} assuming that the true distribution belongs to the model family. Let $\widehat{\bsigma}_\beta$ be the estimate of $\bsigma_\beta$. The  empirical estimate of the MSE, as the function of a pilot estimator $\beta^P$, is given by
	\begin{equation}
		\widehat{MSE}(\gamma) = (\widehat{\beta} - \beta^P)^T (\widehat{\beta} - \beta^P) +\tr(\widehat{\bsigma}_\beta).
		\label{adaptive_gamma}
	\end{equation}
	In particular, we recommend that a robust estimator, such as the MDPDE with $\gamma=0.5$, to be used as a pilot estimator. One may also iterate this process by taking the previous stage's optimum $\gamma$ as the pilot estimator of the current stage, and proceed until convergence. In our numerical examples, we have used this iterative procedure.

	\section{Numerical Results} \label{Sec:Numerical Results}

	To investigate the performances of our proposed method, we conduct an extensive simulation study under  different sample sizes and  different types of outliers. We compare the performance with the weighted likelihood estimator (WLE) of \cite{Beyaztas2020} as a robust alternative method and the generalized least squares (GLS) and ordinary least squares (OLS) estimators. As the robustness properties of the MDPDE depend on the choice of the tuning parameter, we have taken four fixed values of $\gamma = 0.1, 0.2, 0.3$ and $0.4$ along with the data-driven adaptive choice that minimizes the MSE as discussed in Section \ref{sec:opt_dpd}. The pilot estimator is used iteratively until convergence. 

	We consider the random effects model given in Equation (\ref{Eq:3}). The vector of regression coefficients is taken as $\beta  = \left(2, 2.4, -1.2, 1.6, -0.5 \right)^{T}$, where the first component is the intercept term. The individual-specific effects $\alpha_i$s are generated from the standard normal distribution.  The explanatory variables $x_{itk}$ for $k = 2, 3,4, 5$ are generated as follows:
	\begin{equation}
		x_{itk} \sim \left \{\begin{array}{ll}
			\chi_2^2 - 2 &~ \text{for}~k=2, \\
			N \left( 0, 1 \right) &~ \text{for}~k > 2,
		\end{array}
		\right.
	\end{equation}
	where $\chi_2^2$ represents the chi-square distribution with 2 degrees of freedom. One regressor is generated from a skewed distribution  to avoid a symmetric experimental design.

	To evaluate the performance of each estimator based on $S = 1000$ simulations, we compute the mean squared errors (MSEs) of $\widehat{\beta}$ given by
	\begin{equation}
		MSE = \frac{1}{S} \sum_{s=1}^S \Big\| \widehat{\beta}^s-\beta \Big\|^2,~~s=1, 2, \cdots, S,
	\end{equation}
	where $\widehat{\beta}^s$ is the estimate obtained from the $s$-th replication and $\beta$ is the true value of the parameter. As all estimators are root-$N$ consistent, the MSE values presented here are multiplied by $N$. Thus, in this section, we denote MSE as the MSE of $\sqrt{N}\widehat{\beta}$. In addition, the outliers deleted mean prediction error (MPE) is calculated to assess the predictive performance of the methods under consideration. We define MPE as
	\begin{equation}
		MPE = \frac{1}{S} \sum_{s=1}^S \left[\frac{1}{NT - m^{(s)}}\sum_{i=1}^N \sum_{t=1}^T \left(1-c_{it}^{(s)}\right) \left( y_{it}^{(s)} - \widehat{y}_{it}^{(s)} \right)^2 \right],
	\end{equation}
	where $y_{it}^{(s)}$ and $\widehat{y}_{it}^{(s)}$ are the actual and predicted value of the $(i,t)$-th element of $y$, respectively. The indicator variable $c_{it}^{(s)}$ is 1 if $y_{it}^{(s)}$ is an outlier, and 0 otherwise. Here  $m^{(s)}$  denotes the total number of outliers in the $s$-th simulated sample. In pure data, we assume that there is no outlier, i.e., $c_{it}^{(s)}=0$ for all $i, t$ and $s$. 

	\begin{table}[ht]
		\centering
		\caption{The MSEs of $\sqrt{N}\widehat{\beta}$ (first rows)  and the MPEs (second rows) of all estimators when $N = 25,  50, 100, 200$ for a fixed time dimension $T = 5$ (first 4 columns) and $T = 10, 15, 20, 25$ for a fixed cross-sectional dimension $N = 100$ (last 4 columns).}
		\begin{tabular}{lrrrr|rrrr}
			\toprule
			& N=25 & N=50 & N=100 & N=200 & T=10 & T=15 & T=20 & T=25 \\ 
			\hline
			DPD(0.1) & 2.1526 & 2.1166 & 2.1962 & 2.0683 & 1.5015 & 1.3676 & 1.3186 & 1.2810 \\ 
			& 1.8857 & 1.9227 & 1.9504 & 1.9616 & 1.9554 & 1.9638 & 1.9655 & 1.9626 \\ \hline
			DPD(0.2) & 2.2920 & 2.2538 & 2.3052 & 2.1860 & 1.6267 & 1.4734 & 1.3962 & 1.3304 \\ 
			& 1.8911 & 1.9252 & 1.9515 & 1.9622 & 1.9566 & 1.9647 & 1.9663 & 1.9632 \\ \hline
			DPD(0.3) & 2.4397 & 2.3918 & 2.4186 & 2.3035 & 1.6968 & 1.5043 & 1.4352 & 2.6966 \\ 
			& 1.8970 & 1.9278 & 1.9527 & 1.9628 & 1.9572 & 1.9650 & 1.9664 & 1.9678 \\ \hline
			DPD(0.4) & 2.5523 & 2.4959 & 2.5057 & 2.3917 & 1.7229 & 1.6193 & 2.4267 & 1.3455 \\ 
			& 1.9016 & 1.9299 & 1.9536 & 1.9633 & 1.9575 & 1.9651 & 1.9687 & 1.9809 \\ \hline
			DPD(Opt.) & 2.1096 & 2.0641 & 2.1633 & 2.0286 & 1.4342 & 1.2847 & 1.3621 & 1.1063 \\ 
			& 1.8843 & 1.9220 & 1.9501 & 1.9615 & 1.9549 & 1.9632 & 1.9741 & 1.9650 \\ 
			Mean Opt. $\gamma$ & 0.0134 & 0.0094 & 0.0057 & 0.0042 & 0.0067 & 0.0096 & 0.3211 & 0.1433 \\ \hline
			OLS & 2.0906 & 2.0456 & 2.1543 & 2.0114 & 1.4169 & 1.2629 & 1.2213 & 1.2062 \\ 
			& 1.8840 & 1.9221 & 1.9501 & 1.9615 & 1.9548 & 1.9631 & 1.9647 & 1.9617 \\ \hline
			GLS & 2.0975 & 2.0502 & 2.1555 & 2.0108 & 1.4170 & 1.2629 & 1.2212 & 1.2062 \\ 
			& 1.8829 & 1.9218 & 1.9500 & 1.9614 & 1.9548 & 1.9630 & 1.9647 & 1.9617 \\ \hline
			WLE & 2.1164 & 2.0619 & 2.1581 & 2.0217 & 1.4220 & 1.2662 & 1.2207 & 1.2050 \\ 
			& 1.8832 & 1.9217 & 1.9500 & 1.9614 & 1.9548 & 1.9630 & 1.9647 & 1.9617 \\ 
			\bottomrule
		\end{tabular}
		\label{table:pure_data}
	\end{table}

	\begin{table}[ht]
		\centering
		\caption{The MSEs of $\sqrt{N}\widehat{\beta}$ (first rows)  and the MPEs (second rows) of all estimators in the presence of $p = 5\%,  7.5\%, 10\%$ vertical outliers at random location (first 3 columns), concentrated panels (middle 3 columns) 
			and concentrated vertical outliers with 50\% leverage points (last 3 columns). In all cases, $N=100$ and $T=5$.}
		\small{
			\begin{tabular}{lrrr|rrr|rrr}
				\toprule
				& p=5\% & p=7.5\% & p=10\% & p=5\% & p=7.5\% & p=10\% & p=5\% & p=7.5\% & p=10\% \\ 
				\hline
				DPD(0.1) & 2.9669 & 13.8441 & 42.7548 & 2.4000 & 9.5943 & 31.3995 & 2.4125 & 11.1179 & 30.9229 \\ 
				& 1.9899 & 2.0879 & 2.3769 & 1.9868 & 2.0474 & 2.2705 & 1.9773 & 2.0731 & 2.2682 \\ \hline
				DPD(0.2) & 2.8797 & 3.1348 & 3.4681 & 2.2859 & 2.3152 & 2.4898 & 2.2980 & 2.3651 & 2.4573 \\ 
				& 1.9904 & 1.9837 & 1.9922 & 1.9862 & 1.9769 & 1.9752 & 1.9761 & 1.9867 & 1.9806 \\ \hline
				DPD(0.3) & 2.9856 & 3.2175 & 3.4768 & 2.3926 & 2.4257 & 2.6097 & 2.4130 & 2.4661 & 2.5472 \\ 
				& 1.9917 & 1.9848 & 1.9927 & 1.9874 & 1.9781 & 1.9763 & 1.9773 & 1.9879 & 1.9817 \\ \hline
				DPD(0.4) & 3.0478 & 3.2571 & 3.4962 & 2.4734 & 2.5083 & 2.6970 & 2.4990 & 2.5465 & 2.6240 \\ 
				& 1.9924 & 1.9853 & 1.9930 & 1.9883 & 1.9789 & 1.9772 & 1.9782 & 1.9889 & 1.9825 \\ \hline
				DPD(Opt.) & 2.7893 & 3.0803 & 3.4154 & 2.2247 & 2.2450 & 2.4219 & 2.2244 & 2.3169 & 2.4057 \\ 
				& 1.9891 & 1.9826 & 1.9911 & 1.9854 & 1.9763 & 1.9747 & 1.9753 & 1.9862 & 1.9802 \\ 
				Mean Opt. $\gamma$ & 0.1255 & 0.1413 & 0.1584 & 0.1268 & 0.1414 & 0.1497 & 0.1278 & 0.1458 & 0.1539 \\ \hline
				OLS & 30.7954 & 64.5063 & 108.1523 & 27.7604 & 66.3568 & 101.2097 & 28.6091 & 60.7410 & 84.8934 \\ 
				& 2.2745 & 2.6071 & 3.0426 & 2.2359 & 2.6179 & 2.9732 & 2.2499 & 2.5823 & 2.8199 \\ \hline
				GLS & 30.5074 & 64.1230 & 107.5897 & 27.6653 & 66.1988 & 101.0173 & 25.7178 & 60.2970 & 91.2663 \\ 
				& 2.2648 & 2.5944 & 3.0274 & 2.2379 & 2.6206 & 2.9763 & 2.2202 & 2.5727 & 2.8791 \\ \hline
				WLE & 4.5795 & 5.4534 & 6.0592 & 23.3767 & 60.8732 & 94.7575 & 20.3808 & 52.6727 & 83.3365 \\ 
				& 2.0010 & 1.9973 & 2.0103 & 2.1954 & 2.5670 & 2.9128 & 2.1653 & 2.4960 & 2.7991 \\
				\bottomrule
			\end{tabular}
		}
		\label{table:contaminated_data}
	\end{table}

	\subsection{Sample sizes} \label{Sec:Sample size}

	The performance of the estimators is examined under standard normal errors $\varepsilon_{it} \sim$ $\text{N}(0, 1)$ for the increasing number of cross-sectional units ($N = 25, 50, 100,  200$) by keeping time period fixed at $T = 5$ and for the increasing values of time periods ($T = 10, 15, 20, 25$) when the number of cross-sectional units is fixed at $N=100$. Table  \ref{table:pure_data} illustrates the simulation results. For fixed $N$ and $T$, the performance of all estimators are very similar. Theoretically, it is shown that the MDPDE loses efficiency as $\gamma$ increases. It is also observed in this table, although the difference is very small. The mean value of the data-dependent optimum DPD parameter $\gamma$ is also reported in this table. In most cases, they are close to zero, and therefore, its MSE and MPE are almost identical to the OLS estimator. For $T=20$ and 25, the mean optimum $\gamma$ parameters comes out to be large due to very small asymptotic variance. But in the end, they also produce similar MSE and MPE as the OLS and GLS.  
	The table shows that, when $T$ is fixed, the MSEs of $\sqrt{N}\widehat{\beta}$ for an estimator seems to be a constant for all values of $N$. It justifies that all estimators, including our proposed one, are root-$N$ consistent.

	\subsection{Outliers} \label{Sec:Outliers}

	In the following simulation studies, the robustness properties of the estimators are evaluated in the presence of different types of outliers. The cross-sectional size $N = 100$ and time periods $T = 5$ are chosen for the panel size consisting of a total of $500$ observations. Three levels of contamination ($p$) are considered as 5\%, 7.5\%, and 10\%. To generate contaminated data, outliers are inserted using two different ways -- random contamination and concentrated contamination as explained in \cite{Bramati2007}. The random contamination is obtained by distributing outlying data points randomly over all observations. On the other hand, the outliers clustered in all time points of some randomly selected blocks when generating concentrated contamination. The contamination schemes considered are explained below based on the types of outliers.

	\begin{enumerate}
		\item The random outliers in the $y$-direction, namely, random vertical outliers  are obtained by replacing the original standard normal errors to $\varepsilon_{it} \sim$ $\text{N}(10, 1)$ in panel data model given in Equation \eqref{Eq:3}.
		
		\item Concentrated vertical outliers are generated by substituting all errors in the selected random  blocks by $\varepsilon_{it} \sim$ $\text{N}(10, 1)$. 
		
		\item To obtain random leverage points, i.e., random contamination in both $y$-direction and $x$-direction, the same rule in the second scheme is applied to generate the error variable. Then, 50\% values of the regressors corresponding to the outlying points  are replaced by  $x_{itk} \sim\text{N}(5, 1)$. 
		
	\end{enumerate}

	These three sets of simulation results are presented in Table \ref{table:contaminated_data}. They clearly indicate that the proposed estimator MDPDE (except   $\gamma=0.1$) outperforms the conventional estimators (OLS and GLS) and the robust estimator WLE in all cases. 
	The OLS and GLS methods result in obtaining severely distorted estimates of the parameters in the presence of outliers and, as a result, produce the largest MSE and MPE. The WLE works well in the presence of random contamination. But it breaks down when outliers are clustered, or there are leverage points. For small values of $\gamma$, the MDPDEs are not robust against outliers. So DPD(0.1) has large MSE and MPE, however, they are smaller than the OLS estimator.  On the other hand, considering both performance metrics MSE and MPE, other DPD estimators are almost insensitive to different choices of contamination schemes and levels. Moreover, the MSE and MPE are somehow similar to the corresponding values in Table \ref{table:pure_data} for the pure data. It suggests that the performance of the MDPDE for large values of $\gamma$ are robust against outliers. The data-dependent optimum MDPDE automatically selects a high value of $\gamma$ where the estimated MSE is minimized. We notice that the mean value of the optimum $\gamma$ increases with the contamination proportion. 
	
	Other than the above contamination schemes, we have used different sets of $N$ and $T$, different error distributions (eg. chi-square and t-distribution), and placed the center of outliers at various points and observe a similar pattern in those cases. These outcomes suggest that the performance of the MDPDE is almost identical with the classical efficient methods, like OLS and GLS, in the pure data. On the other hand, in contaminated data, the MDPDEs with large values of $\gamma$ yield an accurate and precise estimate of the parameters even when the WLE fails. The data-dependent optimum MDPDE successfully produces the optimum performance and properly balances the efficiency in pure data and robustness properties in the contaminated data. As in real-data analysis, we generally do not have prior knowledge of the proportion and size of outliers, an adaptive choice of the DPD tuning parameter plays an important role. From this simulation study, there is also substantial indication and evidence of the theoretical robustness properties and root-$N$ consistency result of the MDPDE derived in this paper.  

	\subsection{Case Study} \label{Sec:Case Study}

	In this section, we apply the proposed methodology to analyze the Oman weather dataset, which is available on the National Center for Statistics \& Information at {\it{https://data.gov.om/bixytwb/weather}}. This dataset consists of a total of 660 observations ($N = 55$, $T = 12$) covering a cross-section of 55 stations across Oman over the period January 2018 to December 2018. The list of weather stations is given in Table \ref{tab:stations} of Appendix \ref{sec:weather_station}. Our interest is in the relationships between the monthly minimum evaporation (mm) and  two regressors -- minimum temperature  ($^{\circ}$C) and minimum humidity (\%). For this dataset, the following panel data regression model is conducted:
	\begin{equation}
		E_{it} = \beta_0 + \beta_1 Temp_{it} + \beta_2 H_{it} + \nu_{it}~~ i=1, 2, \cdots, 55;~t=1, 2, \cdots, 12,
	\end{equation}
	where $E$ denote the monthly minimum evaporation (mm) as a response variable, $Temp$ and $H$, respectively, represent the minimum temperature ($^{\circ}$C) and minimum humidity (\%) as explanatory variables. The scatter plots of monthly evaporation versus the temperature and humidity are given in Figure \ref{Fig:1}. The first plot shows a cluster on the top that may arise due to unusual points in either the response variable or the explanatory variables. There are also a few large values of humidity in the second plot; moreover, some observations are very much isolated from the main region. 

	In real data, as we do not have prior knowledge of outliers, trimmed mean prediction errors are used to evaluate the performance of different estimators. 
	The estimated individual coefficients, trimmed MPEs and the percentage of trimmed MPE increased over the MDPDE are reported in Table \ref{tab:environmental_data}. Three different values of trimming percentages are considered as $p = 20\%, 10\%$, and $5 \%$. The optimum value of the DPD parameter based on the iterative method, discussed in Section \ref{sec:opt_dpd}, comes out to be $\gamma=0.1439$ for this data set. From Table \ref{tab:environmental_data}, we observe that the  MDPDE with the optimum $\gamma$ parameter yields the smallest MPE for all trimming percentages. The higher the trimming proportions, the better the performance from the MDPDE. For example, in 20\% trimming case, the MDPDE gives 16.75\%, 25.41\% and 8.07\% better prediction than the OLS, GLS and WLE, respectively, for the remaining 80\% observations.  By construction, the OLS estimator produces the least MPE in the linear panel data regression model when all observations are considered. Thus, the corresponding regression line moves closer to the outliers by sacrificing the prediction power for good observations. On the other hand, the MDPDE gives a better fit for those observations that are close to the fitted model and are least affected by  outliers. Being a robust method, the WLE gives a relatively smaller MPE; however, the MDPDE shows further improvement.  All the results clearly demonstrate that the proposed estimator has considerably better predictive ability compared to the traditional and robust WLE methods.
	\begin{figure}
		\centering
		\includegraphics[width=9cm]{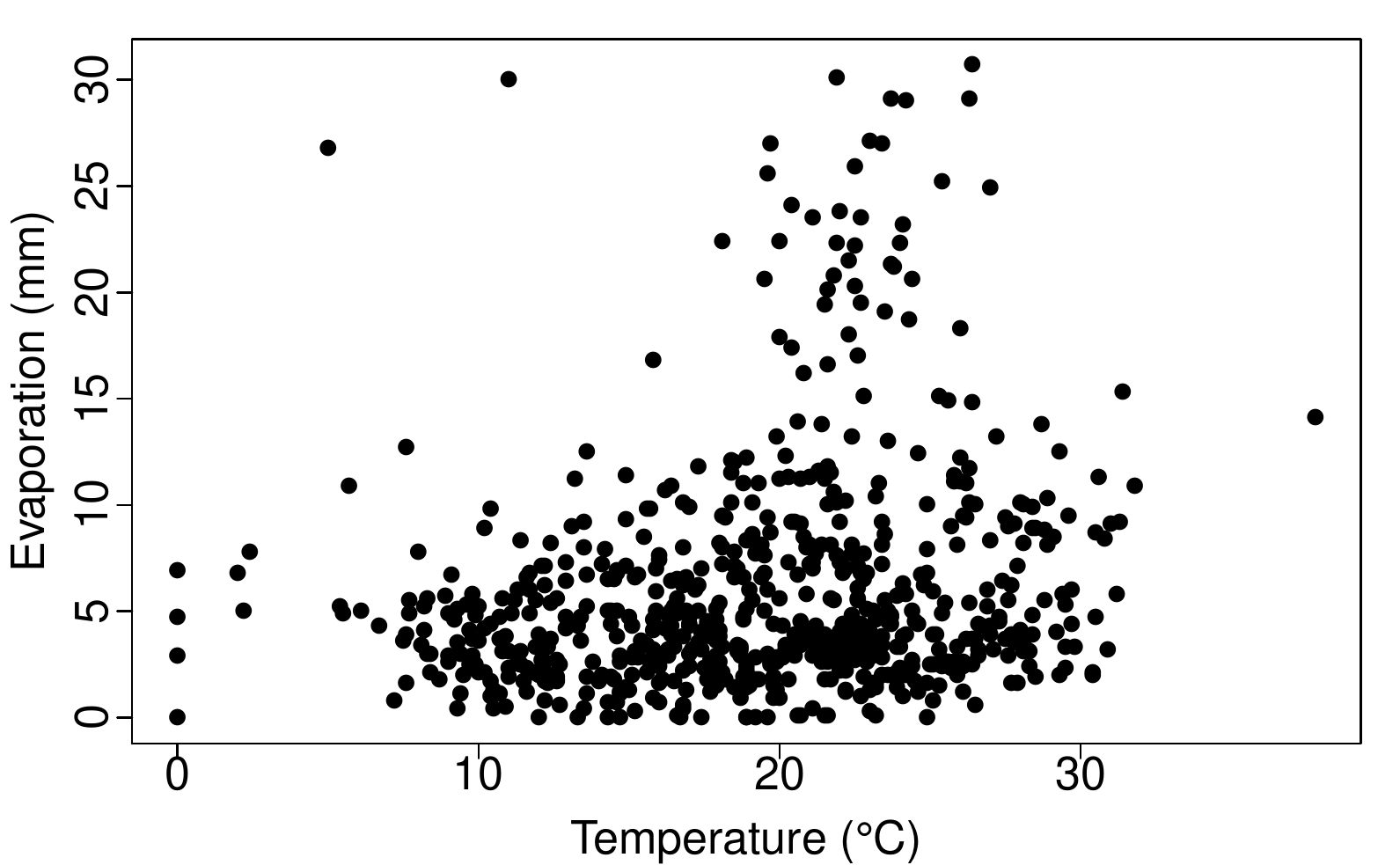}
		\includegraphics[width=9cm]{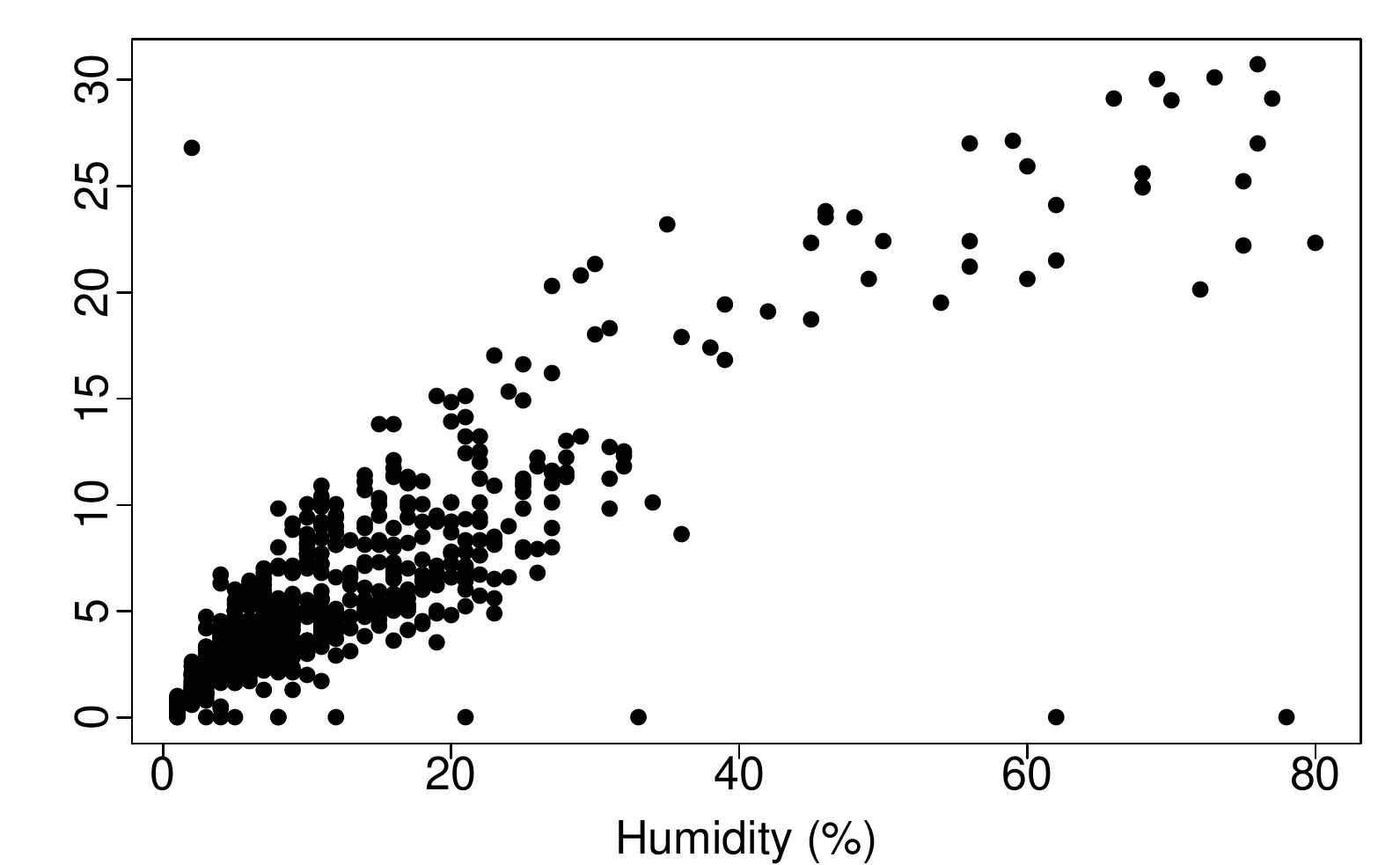}
		\caption{Scatter plots of the monthly temperature versus evaporation (left panel) and humidity versus evaporation (right panel) for Oman weather data.}
		\label{Fig:1}
	\end{figure}	

	\begin{table}[h!]
		\centering
		\caption{The estimates of individual coefficients and MPEs of all estimators for different trimmings.}
		\begin{tabular}{ccrrrr}
			\toprule
			Trimming & Estimates & DPD(Opt.) & OLS & GLS & WLE\\
			\hline
			& $\widehat{\beta}_0$ & $-2.4570$ & $-1.6281$ & $-1.7009$ & $-2.6621$\\
			& $\widehat{\beta}_1$ & 0.1764 & 0.1732 & 0.1860 & 0.2030\\
			Full Data & $\widehat{\beta}_2$ & 0.4039 & 0.3429 & 0.3293 & 0.3868\\
			& $\widehat{\sigma}_\alpha$ & 0.4674 & 1.2530 & 1.1945 & 0.0567\\
			& $\widehat{\sigma}_\epsilon$ & 1.5860 & 5.5990 & 6.0516 & 2.1475 \\
			\hline
			$p=20\%$ & MPE & 0.9351 & 1.0918 & 1.1727 & 1.0107\\
			& Increase & -- & 16.75\% & 25.41\% & 8.07\%\\
			\hline
			$p=10\%$ & MPE & 1.5704 & 1.6937 & 1.8000 & 1.6103\\
			& Increase & -- & 7.85\% & 14.62\% & 2.54\%\\
			\hline
			$p=5\%$ & MPE & 2.1579 & 2.2965 & 2.4379 & 2.1643\\
			& Increase & -- & 6.42\% & 12.97\% & 0.30\%\\
			\bottomrule
		\end{tabular}
		\label{tab:environmental_data}
	\end{table}

	\section{Conclusions} \label{sec:conclusions} 
	We have proposed a robust procedure for estimating the parameters of the linear panel data regression model with random effects using the density power divergence. The efficiency and robustness properties of the proposed estimator MDPDE are controlled by a tuning parameter that can be estimated adaptively base on a given data set. The consistency and the asymptotic distribution of the  MDPDE are theoretically derived. The influence function of the estimator indicates a redescending effect for vertical outliers. Our simulation studies show that the MDPDE with the optimum tuning parameter produces results almost as efficient as the OLS and GLS methods in pure data, while at the same time, it outperforms even the existing robust techniques in the presence of outliers. The real data example also confirms that the MDPDE gives an excellent fit to the major part of the data, whereas the OLS and GLS fits move towards the outliers.

	\bibliography{DPD_Reference}
	
	\section*{Appendix}
	\addcontentsline{toc}{section}{Appendices}
	\renewcommand{\thesubsection}{\Alph{subsection}}
	\numberwithin{equation}{subsection}
	
	\subsection{Proof of Theorem \ref{theorem:asymp}}
	\begin{proof}
		The proof of the first part closely follows the consistency of the maximum likelihood estimator with the line of modifications as given in Theorem 3.1 of \cite{MR3117102}. For brevity, we only present the detailed proof of the second part. 
		
		Let $\widehat{\theta}$ be the MDPDE of $\theta$. Then
		\begin{equation}
			\frac{\partial}{\partial \theta} \widehat{d}_\gamma(f_\theta, g) =  \frac{\partial}{\partial \theta} \left[\frac{1}{N} \sum_{i=1}^{N}	\int_y  f^{1+\gamma}_\theta(y|x_i) dy - \frac{1+\gamma}{N \gamma} \sum_{i=1}^{N} f_\theta^{\gamma}(y_i | x_i)\right] =0.
		\end{equation}
		Thus, it can be written as the estimating equation of an M-estimator as follows
		\begin{equation}
			\sum_{i=1}^N  \Psi_{\widehat{\theta}}(y_i|x_i) = 0, 
			\label{psi}
		\end{equation}
		where
		\begin{equation}
			\Psi_{\theta}(y_i|x_i) = u_\theta(y_i|x_i) f_\theta^\gamma(y_i|x_i) - \int_y u_\theta(y|x_i) f_\theta^{1+\gamma}(y_i|x_i) dy.
		\end{equation}
		Let $\theta_g$ be the true value of $\theta$, then $ E\left(\sum_{i=1}^N  \Psi_{\widehat{\theta}}(y_i|x_i)\right) = 0$ gives
		\begin{equation}
			\sum_{i=1}^N \Bigg[\int_y u_{\theta_g}(y|x_i) f_{\theta_g}^\gamma(y|x_i) g(y|x_i) dy  - \int_y u_{\theta_g}(y|x_i) f_{\theta_g}^{1+\gamma}(y_i|x_i) dy \Bigg] = 0 .    \label{theta_g}
		\end{equation}
		Taking a Taylor series expansion of Equation \eqref{psi}, we get
		\begin{equation}
			\begin{split}
				\frac{1}{N}\sum_{i=1}^N  \Psi_{\theta_g}(y_i|x_i)  &+ \frac{1}{N}\sum_{i=1}^N  \frac{\partial }{\partial \theta}\Psi_\theta(y_i|x_i)\Big|_{\theta = \theta_g} (\widehat{\theta} -\theta_g) + R_N = 0,\\
				\mbox{or } \sqrt{N}(\widehat{\theta} -\theta_g) & = - \left[ \frac{1}{N}\sum_{i=1}^N  \frac{\partial }{\partial \theta}\Psi_\theta(y_i|x_i)\Big|_{\theta = \theta_g} \right]^{-1} \left[\frac{1}{\sqrt{N}}\sum_{i=1}^N  \Psi_{\theta_g}(y_i|x_i) + \sqrt{N}R_N\right],
				\label{taylor}
			\end{split}
		\end{equation}
		where $R_N$ is the remainder  term. 
		Using the weak law of large numbers (WLLN), we have
		\begin{equation}
			\begin{split}
				& \frac{1}{N}\sum_{i=1}^N  \frac{\partial }{\partial \theta}\Psi_\theta(y_i|x_i) \\
				& \overset{p}{\to} \lim_{N \rightarrow \infty} E\left[ \frac{1}{N}\sum_{i=1}^N  \frac{\partial }{\partial \theta}\Psi_\theta(y_i|x_i) \right]\\
				& \overset{p}{\to} \lim_{N \rightarrow \infty} \frac{1}{N}\sum_{i=1}^N  E\left[ \frac{\partial }{\partial \theta} \left( u_\theta f_\theta^\gamma -  \int u_\theta f_\theta^{1+\gamma}  \right) \right]\\
				& \overset{p}{\to} \lim_{N \rightarrow \infty} \frac{1}{N}\sum_{i=1}^N  E\left[  -I_\theta f_\theta^\gamma + \gamma u_\theta u_\theta^T f_\theta^\gamma -  \int \left\{ -  I_\theta f_\theta^{1+\gamma} + (1+\gamma) u_\theta u_\theta^T f_\theta^{1+\gamma}  \right\} \right]\\
				& \overset{p}{\to}  \lim_{N \rightarrow \infty} \frac{1}{N}\sum_{i=1}^N  \left[ -  \int I_\theta f_\theta^\gamma g + \gamma \int u_\theta u_\theta^T f_\theta^\gamma g +  \int    I_\theta f_\theta^{1+\gamma} - (1+\gamma) \int u_\theta u_\theta^T f_\theta^{1+\gamma}  \right]\\
				& \overset{p}{\to} - \lim_{N \rightarrow \infty} \frac{1}{N}\sum_{i=1}^N  \left[  \int u_\theta u_\theta^T f_\theta^{1+\gamma} +  \int \left(I_\theta -\gamma  u_\theta u_\theta^T \right) (g - f_\theta) f_\theta^\gamma   \right].
			\end{split}
		\end{equation}
		So
		\begin{equation}
			\frac{1}{N}\sum_{i=1}^N  \frac{\partial }{\partial \theta}\Psi_\theta(y_i|x_i)\Big|_{\theta = \theta_g}  \overset{p}{\to} - \lim_{N \rightarrow \infty} \frac{1}{N}\sum_{i=1}^N \bJ^{(i)} = -  \bJ .
			\label{bj}
		\end{equation}
		From Equation \eqref{theta_g}, we get
		\begin{equation}
			\begin{split}
				& E\left[\frac{1}{\sqrt{N}}\sum_{i=1}^N  \Psi_{\theta_g}(y_i|x_i) \right] \\
				& = \frac{1}{\sqrt{N}} \sum_{i=1}^N \Bigg[\int_y u_{\theta_g}(y|x_i) f_{\theta_g}^\gamma(y|x_i) g(y|x_i) dy  - \int_y u_{\theta_g}(y|x_i) f_{\theta_g}^{1+\gamma}(y_i|x_i) dy \Bigg]\\
				&= 0.     \label{E_phi}
			\end{split}
		\end{equation}
		Now
		\begin{equation}
			\begin{split}
				& V\left[\frac{1}{\sqrt{N}}\sum_{i=1}^N  \Psi_{\theta_g}(y_i|x_i) \right] \\
				& = \frac{1}{N} \sum_{i=1}^N V\left[ \Psi_{\theta_g}(y_i|x_i) \right] \\
				& = \frac{1}{N} \sum_{i=1}^N \Bigg[\int_y u_{\theta_g}(y|x_i) u_{\theta_g}^T(y|x_i) f_{\theta_g}^{2\gamma}(y|x_i) g(y|x_i) dy  - \xi^{(i)} \xi^{(i)T}\Bigg]\\
				& = \frac{1}{N} \sum_{i=1}^N \bK^{(i)}.   
				\label{var_phi}
			\end{split}
		\end{equation}
		Following Section 5 of \cite{MR1699953} or Section 2.7 of \cite{MR1663158} and using Equations \eqref{E_phi} and \eqref{var_phi}, the central limit theorem (CLT) for the independent but not identical random variables gives
		\begin{equation}
			\frac{1}{\sqrt{N}}\sum_{i=1}^N  \Psi_{\theta_g}(y_i|x_i) \overset{a}{\sim} N\left(0, \bK\right).
			\label{clt}
		\end{equation}
		Under regularity condition (A3), it can be easily shown that the reminder term $\sqrt{N}R_N = o_p(1).$ 
		Therefore, combining Equations \eqref{bj} and \eqref{clt}, we get from Equation \eqref{taylor}
		\begin{equation}
			\sqrt{N}(\widehat{\theta} -\theta_g) \overset{a}{\sim} N\left(0, \bJ^{-1}\bK \bJ^{-1}\right).
		\end{equation}
		This completes the proof.
	\end{proof}

	\subsection{Vector $\xi^{(i)}$ at Model} \label{appendix:xi}
	
	From Equations \eqref{score_functions} and \eqref{J_i}, we get
	\begin{equation}
		\begin{split}
			\xi_\beta^{(i)} &= \int_{y_i} 	u_\beta(y_i | x_i)  f_\theta^{1+\gamma}(y_i | x_i) dy_i \\
			& = \int_{y_i}	\left[ \frac{1}{\sigma_\epsilon^2}\sum_{t=1}^{T} x_{it} (y_{it} - x_{it} \beta) - \frac{ T \bar{x}_i \sigma_\alpha^2 } {\sigma_\epsilon^2 ( \sigma_\epsilon^2 + T \sigma_\alpha^2 )}  \sum_{t=1}^{T} (y_{it} - x_{it} \beta)\right] f_\theta^{1+\gamma}(y_i | x_i) dy_i \\
			& = 0, \ \ \mbox{from \eqref{int_sr_ss}}.
			\label{xi_beta}
		\end{split}
	\end{equation}	
	
	\noindent
	From Equations \eqref{score_functions} and \eqref{J_i}, we get
	\begin{equation}
		\begin{split}
			\xi_{\sigma_\alpha^2}^{(i)} &= \int_{y_i} 	u_{\sigma_\alpha^2}(y | x_i)  f_\theta^{1+\gamma}(y_i | x_i) dy_i \\
			& = \int_{y_i}	\left[ -\frac{T}{2  ( \sigma_\epsilon^2 + T \sigma_\alpha^2 )} 
			+ \frac{1} {2( \sigma_\epsilon^2 + T \sigma_\alpha^2 )^2} \left\{ \sum_{t=1}^{T} (y_{it} - x_{it} \beta) \right\}^2\right] f_\theta^{1+\gamma}(y_i | x_i) dy_i\\
			& =  -\frac{T}{2  ( \sigma_\epsilon^2 + T \sigma_\alpha^2 )} \times M(1+\gamma) \mbox{, using }\eqref{int_f}\\
			& \hspace{1in} + \frac{1} {2( \sigma_\epsilon^2 + T \sigma_\alpha^2 )^2} \times MT ( \sigma_\epsilon^2  + T \sigma_\alpha^2) \mbox{, using \eqref{s2}}\\
			& =  -\frac{MT \gamma}{2  ( \sigma_\epsilon^2 + T \sigma_\alpha^2 )} .
			\label{xi_alpha}
		\end{split}
	\end{equation}	
	
	From Equations \eqref{score_functions} and \eqref{J_i}, we get
	\begin{equation}
		\begin{split}
			\xi_{\sigma_\epsilon^2}^{(i)} &= \int_{y_i} 	u_{\sigma_\epsilon^2}(y | x_i)  f_\theta^{1+\gamma}(y_i | x_i) dy_i \\
			& = \int_{y_i}	\Bigg[ -\frac{T \left[  \sigma_\epsilon^2 + (T-1) \sigma_\alpha^2 \right]}{2 \sigma_\epsilon^{2} ( \sigma_\epsilon^2 + T \sigma_\alpha^2 )} 	 + \frac{1}{2\sigma_\epsilon^4}\sum_{t=1}^{T} (y_{it} - x_{it} \beta)^2 \\
			& \hspace{2cm} - \frac{\sigma_\alpha^2 (2\sigma_\epsilon^2 + T \sigma_\alpha^2 )} {2\sigma_\epsilon^4 ( \sigma_\epsilon^2 + T \sigma_\alpha^2 )^2} \left\{ \sum_{t=1}^{T} (y_{it} - x_{it} \beta) \right\}^2 \Bigg] f_\theta^{1+\gamma}(y_i | x_i) dy_i\\
			& = -\frac{T \left[  \sigma_\epsilon^2 + (T-1) \sigma_\alpha^2 \right]}{2 \sigma_\epsilon^{2} ( \sigma_\epsilon^2 + T \sigma_\alpha^2 )}  \times M(1+\gamma) \mbox{, using }\eqref{int_f}\\
			& \hspace{1in} + \frac{1}{2\sigma_\epsilon^4} \times T M ( \sigma_\epsilon^2  + \sigma_\alpha^2 ) \mbox{, using } \eqref{int_y2}\\
			& \hspace{1in} - \frac{\sigma_\alpha^2 (2\sigma_\epsilon^2 + T \sigma_\alpha^2 )} {2\sigma_\epsilon^4 ( \sigma_\epsilon^2 + T \sigma_\alpha^2 )^2} \times MT ( \sigma_\epsilon^2  + T \sigma_\alpha^2) \mbox{, using \eqref{s2}}\\
			& = -\frac{M T \gamma \left[  \sigma_\epsilon^2 + (T-1) \sigma_\alpha^2 \right]}{2 \sigma_\epsilon^2 ( \sigma_\epsilon^2 + T \sigma_\alpha^2 ) }.
			\label{xi_epsilon}
		\end{split}
	\end{equation}

	\subsection{Matrix $\bJ^{(i)}$ at Model} \label{appendix:J}

	From Equations \eqref{score_functions} and \eqref{J_i}, we get
	\begin{equation}
		\begin{split}
			\bJ_{\sigma_\alpha^2}^{(i)} &= \int_{y_i} 	u^2_{\sigma_\alpha^2}(y | x_i)  f_\theta^{1+\gamma}(y_i | x_i) dy_i \\
			& = \int_{y_i}	\left[ -\frac{T}{2  ( \sigma_\epsilon^2 + T \sigma_\alpha^2 )} 
			+ \frac{1} {2( \sigma_\epsilon^2 + T \sigma_\alpha^2 )^2} \left\{ \sum_{t=1}^{T} (y_{it} - x_{it} \beta) \right\}^2\right]^2 f_\theta^{1+\gamma}(y_i | x_i) dy_i\\
			&= 	\frac{T^2}{4  ( \sigma_\epsilon^2 + T \sigma_\alpha^2 )^2} \int_{y_i} f_\theta^{1+\gamma}(y_i | x_i) dy_i	
			- \frac{T}{2  ( \sigma_\epsilon^2 + T \sigma_\alpha^2 )^3} \int_{y_i} \left\{ \sum_{t=1}^{T} (y_{it} - x_{it} \beta) \right\}^2 f_\theta^{1+\gamma}(y_i | x_i) dy_i\\
			& \hspace{1cm} +
			\frac{1}{4 ( \sigma_\epsilon^2 + T \sigma_\alpha^2 )^4} \int_{y_i} \left\{ \sum_{t=1}^{T} (y_{it} - x_{it} \beta) \right\}^4 f_\theta^{1+\gamma}(y_i | x_i) dy_i\\
			& =  M (1+\gamma) \frac{T^2}{4  ( \sigma_\epsilon^2 + T \sigma_\alpha^2 )^2}   \mbox{, using \eqref{int_f})}\\
			& \ \ \ \ - \frac{T}{2  ( \sigma_\epsilon^2 + T \sigma_\alpha^2 )^3} \times MT( \sigma_\epsilon^2  + T \sigma_\alpha^2)    \mbox{, using \eqref{s2}) }\\
			& \ \ \ \ + \frac{1}{4 ( \sigma_\epsilon^2 + T \sigma_\alpha^2 )^4} \times \frac{3MT^2}{(1+\gamma)}( \sigma_\epsilon^2  + T \sigma_\alpha^2)^2    \mbox{, using \eqref{s4})}\\
			& = \frac{M T^2 ( \gamma^2 + 2 )} {4 (1+\gamma) ( \sigma_\epsilon^2 + T \sigma_\alpha^2 )^2} .
		\end{split}
	\end{equation}	
	
	\noindent
	From Equations \eqref{score_functions} and \eqref{J_i}, we get
	\begin{equation}
		\begin{split}
			\bJ_\beta^{(i)} &= \int_{y_i} 	u_\beta(y_i | x_i) u^T_\beta(y_i | x_i) f_\theta^{1+\gamma}(y_i | x_i) dy_i \\
			& = \int_{y_i}	\left[ \frac{1}{\sigma_\epsilon^2}\sum_{t=1}^{T} x_{it} (y_{it} - x_{it} \beta) - \frac{ T \bar{x}_i \sigma_\alpha^2 } {\sigma_\epsilon^2 ( \sigma_\epsilon^2 + T \sigma_\alpha^2 )}  \sum_{t=1}^{T} (y_{it} - x_{it} \beta)\right]\\
			& \hspace{1cm}	\left[ \frac{1}{\sigma_\epsilon^2}\sum_{t=1}^{T} x_{it} (y_{it} - x_{it} \beta) - \frac{ T \bar{x}_i \sigma_\alpha^2 } {\sigma_\epsilon^2 ( \sigma_\epsilon^2 + T \sigma_\alpha^2 )}  \sum_{t=1}^{T} (y_{it} - x_{it} \beta)\right]^T f_\theta^{1+\gamma}(y_i | x_i) dy_i\\
			& = \sum_{t=1}^{T} \Bigg[\frac{1}{\sigma_\epsilon^4} x_{it} x_{it}^T  -  \frac{ 2  T \sigma_\alpha^2 x_{it}  \bar{x}_i^T } {\sigma_\epsilon^4 ( \sigma_\epsilon^2 + T \sigma_\alpha^2 )} + \frac{ T^2  \sigma_\alpha^4 \bar{x}_i \bar{x}_i^T} {\sigma_\epsilon^4 ( \sigma_\epsilon^2 + T \sigma_\alpha^2 )^2} \Bigg]\int_{y_i} (y_{it} - x_{it} \beta)^2 f_\theta^{1+\gamma}(y_i | x_i) dy_i\\
			& \hspace{1cm} + \sum_{t\neq t'} \Bigg[\frac{1}{\sigma_\epsilon^4} x_{it} x_{it'}^T  -  \frac{ 2  T \sigma_\alpha^2 x_{it}  \bar{x}_i^T } {\sigma_\epsilon^4 ( \sigma_\epsilon^2 + T \sigma_\alpha^2 )} + \frac{ T^2  \sigma_\alpha^4 \bar{x}_i \bar{x}_i^T} {\sigma_\epsilon^4 ( \sigma_\epsilon^2 + T \sigma_\alpha^2 )^2} \Bigg]\int_{y_i} (y_{it} - x_{it} \beta)(y_{it'} - x_{it'} \beta) f_\theta^{1+\gamma}(y_i | x_i) dy_i\\
			&  = M ( \sigma_\epsilon^2  + \sigma_\alpha^2 )\sum_{t=1}^{T} \Bigg[\frac{1}{\sigma_\epsilon^4} x_{it} x_{it}^T  -  \frac{ 2  T \sigma_\alpha^2 x_{it}  \bar{x}_i^T } {\sigma_\epsilon^4 ( \sigma_\epsilon^2 + T \sigma_\alpha^2 )} + \frac{ T^2  \sigma_\alpha^4 \bar{x}_i \bar{x}_i^T} {\sigma_\epsilon^4 ( \sigma_\epsilon^2 + T \sigma_\alpha^2 )^2} \Bigg] \mbox{, using }\eqref{int_y2}\\
			& \hspace{1cm} +  M  \sigma_\alpha^2 \sum_{t\neq t'} \Bigg[\frac{1}{\sigma_\epsilon^4} x_{it} x_{it'}^T  -  \frac{ 2  T \sigma_\alpha^2 x_{it}  \bar{x}_i^T } {\sigma_\epsilon^4 ( \sigma_\epsilon^2 + T \sigma_\alpha^2 )} + \frac{ T^2  \sigma_\alpha^4 \bar{x}_i \bar{x}_i^T} {\sigma_\epsilon^4 ( \sigma_\epsilon^2 + T \sigma_\alpha^2 )^2} \Bigg]  \mbox{, using }\eqref{int_yy}\\
			& = M  \sigma_\alpha^2 \sum_{t, t' = 1}^T \Bigg[\frac{1}{\sigma_\epsilon^4} x_{it} x_{it'}^T  -  \frac{ 2  T \sigma_\alpha^2 x_{it}  \bar{x}_i^T } {\sigma_\epsilon^4 ( \sigma_\epsilon^2 + T \sigma_\alpha^2 )} + \frac{ T^2  \sigma_\alpha^4 \bar{x}_i \bar{x}_i^T} {\sigma_\epsilon^4 ( \sigma_\epsilon^2 + T \sigma_\alpha^2 )^2} \Bigg]\\
			&  \hspace{1cm} + M  \sigma_\epsilon^2 \sum_{t=1}^{T} \Bigg[\frac{1}{\sigma_\epsilon^4} x_{it} x_{it}^T  -  \frac{ 2  T \sigma_\alpha^2 x_{it}  \bar{x}_i^T } {\sigma_\epsilon^4 ( \sigma_\epsilon^2 + T \sigma_\alpha^2 )} + \frac{ T^2  \sigma_\alpha^4 \bar{x}_i \bar{x}_i^T} {\sigma_\epsilon^4 ( \sigma_\epsilon^2 + T \sigma_\alpha^2 )^2} \Bigg]\\
			&  = M  \sigma_\alpha^2  \Bigg[\frac{T^2}{\sigma_\epsilon^4} \bar{x}_i \bar{x}_i^T  -  \frac{ 2  T^3 \sigma_\alpha^2 \bar{x}_i \bar{x}_i^T } {\sigma_\epsilon^4 ( \sigma_\epsilon^2 + T \sigma_\alpha^2 )} + \frac{ T^4  \sigma_\alpha^4 \bar{x}_i \bar{x}_i^T} {\sigma_\epsilon^4 ( \sigma_\epsilon^2 + T \sigma_\alpha^2 )^2} \Bigg],  \\
			&  \hspace{1cm} + M  \sigma_\epsilon^2 \sum_{t=1}^{T} \frac{1}{\sigma_\epsilon^4} x_{it} x_{it}^T   + M  \sigma_\epsilon^2  \Bigg[  -  \frac{ 2  T^2 \sigma_\alpha^2 \bar{x}_i \bar{x}_i^T} {\sigma_\epsilon^4 ( \sigma_\epsilon^2 + T \sigma_\alpha^2 )} + \frac{ T^3  \sigma_\alpha^4 \bar{x}_i \bar{x}_i^T} {\sigma_\epsilon^4 ( \sigma_\epsilon^2 + T \sigma_\alpha^2 )^2} \Bigg]\\
			& = M  \Bigg[  \frac{\sigma_\epsilon^2}{\sigma_\epsilon^4}  \sum_{t=1}^{T} x_{it} x_{it}^T +  \frac{T^2 \sigma_\alpha^2 }{\sigma_\epsilon^4} \bar{x}_i \bar{x}_i^T + (\sigma_\epsilon^2 + T \sigma_\alpha^2) \left\{  -  \frac{ 2 T^2 \sigma_\alpha^2 \bar{x}_i \bar{x}_i^T} {\sigma_\epsilon^4 ( \sigma_\epsilon^2 + T \sigma_\alpha^2 )} + \frac{ T^3  \sigma_\alpha^4 \bar{x}_i \bar{x}_i^T} {\sigma_\epsilon^4 ( \sigma_\epsilon^2 + T \sigma_\alpha^2 )^2} \right\} \Bigg]\\
			&  = M \sigma_\epsilon^{-4}   \Bigg[  \sigma_\epsilon^2 \sum_{t=1}^{T} x_{it} x_{it}^T +  T^2 \sigma_\alpha^2  \bar{x}_i \bar{x}_i^T    - 2 T^2   \sigma_\alpha^2 \bar{x}_i \bar{x}_i^T +  \frac{T^3  \sigma_\alpha^4 \bar{x}_i \bar{x}_i^T}{( \sigma_\epsilon^2 + T \sigma_\alpha^2 )}  \Bigg]\\
			& = M \sigma_\epsilon^{-4} \Bigg[ \sigma_\epsilon^2 \sum_{t=1}^{T} x_{it} x_{it}^T   +  T^2 \sigma_\alpha^2 \left( \frac{ T \sigma_\alpha^2 }{( \sigma_\epsilon^2 + T \sigma_\alpha^2 )} -1 \right) \bar{x}_i \bar{x}_i^T  \Bigg] .
		\end{split}
	\end{equation}

	From Equations \eqref{score_functions} and \eqref{J_i}, we get
	\begin{equation}
		\begin{split}
			\bJ_{\sigma_\epsilon^2}^{(i)} &= \int_{y_i} 	u_{\sigma_\epsilon^2}^2(y | x_i)  f_\theta^{1+\gamma}(y_i | x_i) dy_i \\
			& = \int_{y_i}	\Bigg[ -\frac{T \left[  \sigma_\epsilon^2 + (T-1) \sigma_\alpha^2 \right]}{2 \sigma_\epsilon^{2} ( \sigma_\epsilon^2 + T \sigma_\alpha^2 )} 	 + \frac{1}{2\sigma_\epsilon^4}\sum_{t=1}^{T} (y_{it} - x_{it} \beta)^2 \\
			& \hspace{2cm} - \frac{\sigma_\alpha^2 (2\sigma_\epsilon^2 + T \sigma_\alpha^2 )} {2\sigma_\epsilon^4 ( \sigma_\epsilon^2 + T \sigma_\alpha^2 )^2} \left\{ \sum_{t=1}^{T} (y_{it} - x_{it} \beta) \right\}^2 \Bigg]^2 f_\theta^{1+\gamma}(y_i | x_i) dy_i\\
			&= 	\frac{T^2 \left[  \sigma_\epsilon^2 + (T-1) \sigma_\alpha^2 \right]^2}{4 \sigma_\epsilon^{4} ( \sigma_\epsilon^2 + T \sigma_\alpha^2 )^2} \int_{y_i} f_\theta^{1+\gamma}(y_i | x_i) dy_i	\\
			& \hspace{1cm} + \frac{1}{4\sigma_\epsilon^8} \int_{y_i} \left\{ \sum_{t=1}^{T} (y_{it} - x_{it} \beta)^2 \right\}^2 f_\theta^{1+\gamma}(y_i | x_i) dy_i\\
			& \hspace{1cm} + \frac{\sigma_\alpha^4 (2\sigma_\epsilon^2 + T \sigma_\alpha^2 )^2} {4\sigma_\epsilon^8 ( \sigma_\epsilon^2 + T \sigma_\alpha^2 )^4} \int_{y_i} \left\{ \sum_{t=1}^{T} (y_{it} - x_{it} \beta) \right\}^4 f_\theta^{1+\gamma}(y_i | x_i) dy_i\\
			& \hspace{1cm} -
			\frac{T \left[  \sigma_\epsilon^2 + (T-1) \sigma_\alpha^2 \right]}{2\sigma_\epsilon^6 ( \sigma_\epsilon^2 + T \sigma_\alpha^2 )}  \sum_{t=1}^{T}  \int_{y_i}(y_{it} - x_{it} \beta)^2  f_\theta^{1+\gamma}(y_i | x_i) dy_i\\
			& \hspace{1cm} + \frac{\sigma_\alpha^2 (2\sigma_\epsilon^2 + T \sigma_\alpha^2 ) T \left[  \sigma_\epsilon^2 + (T-1) \sigma_\alpha^2 \right]	} {2\sigma_\epsilon^6 ( \sigma_\epsilon^2 + T \sigma_\alpha^2 )^3} \int_{y_i} \left\{ \sum_{t=1}^{T} (y_{it} - x_{it} \beta) \right\}^2 f_\theta^{1+\gamma}(y_i | x_i) dy_i\\	
			& \hspace{1cm}  - \frac{\sigma_\alpha^2 (2\sigma_\epsilon^2 + T \sigma_\alpha^2 )} {2\sigma_\epsilon^8 ( \sigma_\epsilon^2 + T \sigma_\alpha^2 )^2} \sum_{t=1}^{T} \int_{y_i} (y_{it} - x_{it} \beta)^2 \left\{ \sum_{t'=1}^{T} (y_{it'} - x_{it'} \beta) \right\}^2 f_\theta^{1+\gamma}(y_i | x_i) dy_i \\
			&= \frac{T^2 \left[  \sigma_\epsilon^2 + (T-1) \sigma_\alpha^2 \right]^2}{4 \sigma_\epsilon^{4} ( \sigma_\epsilon^2 + T \sigma_\alpha^2 )^2} \times	M  (1+\gamma)   \mbox{, using \eqref{int_f}}\\
			& \hspace{1cm} + \frac{1}{4\sigma_\epsilon^8} \times M \left[T(T+2)  \sigma_\epsilon^4  +  2T(T+2)  \sigma_\epsilon^2\sigma_\alpha^2 + 3T^2 \sigma_\alpha^4 \right]     \mbox{, using \eqref{int_square}}\\
			& \hspace{1cm} + \frac{\sigma_\alpha^4 (2\sigma_\epsilon^2 + T \sigma_\alpha^2 )^2} {4\sigma_\epsilon^8 ( \sigma_\epsilon^2 + T \sigma_\alpha^2 )^4} \times  \frac{3MT^2}{(1+\gamma)} ( \sigma_\epsilon^2  + T \sigma_\alpha^2)^2    \mbox{, using \eqref{s4})}\\
			& \hspace{1cm} -
			\frac{T \left[  \sigma_\epsilon^2 + (T-1) \sigma_\alpha^2 \right]}{2\sigma_\epsilon^6 ( \sigma_\epsilon^2 + T \sigma_\alpha^2 )} \times T M ( \sigma_\epsilon^2  + \sigma_\alpha^2 )     \mbox{, using \eqref{int_y2}}\\
			& \hspace{1cm} + \frac{\sigma_\alpha^2 (2\sigma_\epsilon^2 + T \sigma_\alpha^2 ) T \left[  \sigma_\epsilon^2 + (T-1) \sigma_\alpha^2 \right]	} {2\sigma_\epsilon^6 ( \sigma_\epsilon^2 + T \sigma_\alpha^2 )^3} \times MT( \sigma_\epsilon^2  + T \sigma_\alpha^2)    \mbox{, using \eqref{s2}} \\	
			& \hspace{1cm}  - \frac{\sigma_\alpha^2 (2\sigma_\epsilon^2 + T \sigma_\alpha^2 )} {2\sigma_\epsilon^8 ( \sigma_\epsilon^2 + T \sigma_\alpha^2 )^2} \times T M (1+\gamma)\Big[ (T+2)  \sigma_\epsilon^4 + (T^2 + 2T +3) \sigma_\epsilon^2 \sigma_\alpha^2   \\
			& \hspace{2cm} + 3(T^2 -T + 1 ) \sigma_\alpha^4 \Big] \mbox{, using \eqref{E_s2_s2} }\\
			& = 	 \frac{  M T^2 (\gamma - 1) \left[  \sigma_\epsilon^2 + (T-1) \sigma_\alpha^2 \right]^2} {4\sigma_\epsilon^4  ( \sigma_\epsilon^2 + T \sigma_\alpha^2 )^2}   \\
			& \hspace{1cm} + \frac{MT}{4\sigma_\epsilon^8}   \left[(T+2)  \sigma_\epsilon^4  +  2(T+2)  \sigma_\epsilon^2\sigma_\alpha^2 + 3T \sigma_\alpha^4 \right]    + \frac{3MT^2\sigma_\alpha^4 (2\sigma_\epsilon^2 + T \sigma_\alpha^2 )^2} {4\sigma_\epsilon^8 (1+\gamma) ( \sigma_\epsilon^2 + T \sigma_\alpha^2 )^2}  \\
			& \hspace{1cm}  - \frac{ T M (1+\gamma)\sigma_\alpha^2 (2\sigma_\epsilon^2 + T \sigma_\alpha^2 )} {2\sigma_\epsilon^8 ( \sigma_\epsilon^2 + T \sigma_\alpha^2 )^2} \Big[ (T+2)  \sigma_\epsilon^4 + (T^2 + 2T +3) \sigma_\epsilon^2 \sigma_\alpha^2 + 3(T^2 -T + 1 ) \sigma_\alpha^4 \Big].
		\end{split}
	\end{equation}

	From Equations \eqref{score_functions} and \eqref{J_i}, we get
	\begin{equation}
		\begin{split}
			\bJ_{\beta, \ \sigma_\alpha^2}^{(i)} &= \int_{y_i} 	u_\beta(y_i | x_i) u_{ \sigma_\alpha^2}(y_i | x_i) f_\theta^{1+\gamma}(y_i | x_i) dy_i \\
			& = \int_{y_i}	\left[ \frac{1}{\sigma_\epsilon^2}\sum_{t=1}^{T} x_{it} (y_{it} - x_{it} \beta) - \frac{ T \bar{x}_i \sigma_\alpha^2 } {\sigma_\epsilon^2 ( \sigma_\epsilon^2 + T \sigma_\alpha^2 )}  \sum_{t=1}^{T} (y_{it} - x_{it} \beta)\right]\\
			& \hspace{1cm} \times \left[ -\frac{T}{2  ( \sigma_\epsilon^2 + T \sigma_\alpha^2 )} 
			+ \frac{1} {2( \sigma_\epsilon^2 + T \sigma_\alpha^2 )^2} \left\{ \sum_{t=1}^{T} (y_{it} - x_i \beta) \right\}^2\right] f_\theta^{1+\gamma}(y_i | x_i) dy_i\\
			& =  0 \mbox{ as all odd moments similar to \eqref{int_sr_ss}}	.	
		\end{split}
	\end{equation}	
	
	From Equations \eqref{score_functions} and \eqref{J_i}, we get
	\begin{equation}
		\begin{split}
			\bJ_{\beta, \ \sigma_\epsilon^2}^{(i)} &= \int_{y_i} 	u_\beta(y_i | x_i) u_{ \sigma_\epsilon^2}(y_i | x_i) f_\theta^{1+\gamma}(y_i | x_i) dy_i \\
			& = \int_{y_i}	\left[ \frac{1}{\sigma_\epsilon^2}\sum_{t=1}^{T} x_{it} (y_{it} - x_{it} \beta) - \frac{ T \bar{x}_i \sigma_\alpha^2 } {\sigma_\epsilon^2 ( \sigma_\epsilon^2 + T \sigma_\alpha^2 )}  \sum_{t=1}^{T} (y_{it} - x_{it} \beta)\right]\\
			& \hspace{1cm} \times 	\Bigg[ -\frac{T \left[  \sigma_\epsilon^2 + (T-1) \sigma_\alpha^2 \right]}{2 \sigma_\epsilon^{2} ( \sigma_\epsilon^2 + T \sigma_\alpha^2 )} 	 + \frac{1}{2\sigma_\epsilon^4}\sum_{t=1}^{T} (y_{it} - x_{it} \beta)^2 \\
			& \hspace{2cm} - \frac{\sigma_\alpha^2 (2\sigma_\epsilon^2 + T \sigma_\alpha^2 )} {2\sigma_\epsilon^4 ( \sigma_\epsilon^2 + T \sigma_\alpha^2 )^2} \left\{ \sum_{t=1}^{T} (y_{it} - x_{it} \beta) \right\}^2 \Bigg] f_\theta^{1+\gamma}(y_i | x_i) dy_i\\
			& =  0 \mbox{ as all odd moments similar to \eqref{int_sr_ss}}	.	
		\end{split}
	\end{equation}	
	
	From Equations \eqref{score_functions} and \eqref{J_i}, we get
	\begin{equation}
		\begin{split}
			\bJ_{\sigma_\alpha^2, \ \sigma_\epsilon^2}^{(i)} &= \int_{y_i} 	u_{\sigma_\alpha^2}(y_i | x_i) u_{ \sigma_\epsilon^2}(y_i | x_i) f_\theta^{1+\gamma}(y_i | x_i) dy_i \\
			& = \int_{y_i}\left[ -\frac{T}{2  ( \sigma_\epsilon^2 + T \sigma_\alpha^2 )} 
			+ \frac{1} {2( \sigma_\epsilon^2 + T \sigma_\alpha^2 )^2} \left\{ \sum_{t=1}^{T} (y_{it} - x_{it} \beta) \right\}^2\right]\\
			& \hspace{1cm} \times 	\Bigg[ -\frac{T \left[  \sigma_\epsilon^2 + (T-1) \sigma_\alpha^2 \right]}{2 \sigma_\epsilon^{2} ( \sigma_\epsilon^2 + T \sigma_\alpha^2 )} 	 + \frac{1}{2\sigma_\epsilon^4}\sum_{t=1}^{T} (y_{it} - x_{it} \beta)^2 \\
			& \hspace{2cm} - \frac{\sigma_\alpha^2 (2\sigma_\epsilon^2 + T \sigma_\alpha^2 )} {2\sigma_\epsilon^4 ( \sigma_\epsilon^2 + T \sigma_\alpha^2 )^2} \left\{ \sum_{t=1}^{T} (y_{it} - x_{it} \beta) \right\}^2 \Bigg] f_\theta^{1+\gamma}(y_i | x_i) dy_i\\
			& =  \frac{T^2 \left[  \sigma_\epsilon^2 + (T-1) \sigma_\alpha^2 \right]}{4 \sigma_\epsilon^{2} ( \sigma_\epsilon^2 + T \sigma_\alpha^2 )^2} \int_{y_i} f_\theta^{1+\gamma}(y_i | x_i) dy_i\\
			& \hspace{1cm}  + \frac{1} {4 \sigma_\epsilon^4 ( \sigma_\epsilon^2 + T \sigma_\alpha^2 )^2} \sum_{t=1}^{T} \int_{y_i} (y_{it} - x_{it} \beta)^2  \left\{ \sum_{t=1}^{T} (y_{it} - x_i \beta) \right\}^2 f_\theta^{1+\gamma}(y_i | x_i) dy_i\\
			& \hspace{1cm} - \frac{\sigma_\alpha^2 (2\sigma_\epsilon^2 + T \sigma_\alpha^2 )} {4\sigma_\epsilon^4 ( \sigma_\epsilon^2 + T \sigma_\alpha^2 )^4} \int_{y_i}  \left\{ \sum_{t=1}^{T} (y_{it} - x_{it} \beta) \right\}^4  f_\theta^{1+\gamma}(y_i | x_i) dy_i\\
			& \hspace{1cm} -\frac{T}{4 \sigma_\epsilon^4  ( \sigma_\epsilon^2 + T \sigma_\alpha^2 )}	\sum_{t=1}^{T}  \int_{y_i} (y_{it} - x_{it} \beta)^2 f_\theta^{1+\gamma}(y_i | x_i) dy_i\\
			& \hspace{1cm} + \Bigg[\frac{T\sigma_\alpha^2 (2\sigma_\epsilon^2 + T \sigma_\alpha^2 )} {4\sigma_\epsilon^4 ( \sigma_\epsilon^2 + T \sigma_\alpha^2 )^3} -\frac{T \left[  \sigma_\epsilon^2 + (T-1) \sigma_\alpha^2 \right]}{4 \sigma_\epsilon^{2} ( \sigma_\epsilon^2 + T \sigma_\alpha^2 )^3} \Bigg]  \int_{y_i} \left\{ \sum_{t=1}^{T} (y_{it} - x_{it} \beta) \right\}^2  f_\theta^{1+\gamma}(y_i | x_i) dy_i\\
			&=  \frac{T^2 \left[  \sigma_\epsilon^2 + (T-1) \sigma_\alpha^2 \right]}{4 \sigma_\epsilon^{2} ( \sigma_\epsilon^2 + T \sigma_\alpha^2 )^2} \times	M  (1+\gamma)   \mbox{, using \eqref{int_f}}\\
			& \hspace{1cm}  + \frac{1} {4 \sigma_\epsilon^4 ( \sigma_\epsilon^2 + T \sigma_\alpha^2 )^2} \times T M (1+\gamma)\Big[ (T+2)  \sigma_\epsilon^4 + (T^2 + 2T +3) \sigma_\epsilon^2 \sigma_\alpha^2   \\
			& \hspace{2cm} + 3(T^2 -T +1) \sigma_\alpha^4 \Big]    \mbox{, using \eqref{E_s2_s2}}\\
			& \hspace{1cm} - \frac{\sigma_\alpha^2 (2\sigma_\epsilon^2 + T \sigma_\alpha^2 )} {4\sigma_\epsilon^4 ( \sigma_\epsilon^2 + T \sigma_\alpha^2 )^4} \times  \frac{3MT^2}{(1+\gamma)} ( \sigma_\epsilon^2  + T \sigma_\alpha^2)^2    \mbox{, using \eqref{s4}}\\
			& \hspace{1cm} -\frac{T}{4 \sigma_\epsilon^4  ( \sigma_\epsilon^2 + T \sigma_\alpha^2 )}	 \times T M ( \sigma_\epsilon^2  + \sigma_\alpha^2 )      \mbox{, using \eqref{int_y2}} \\
			& \hspace{1cm} + \frac{T[T \sigma_\alpha^4 - (T-3) \sigma_\alpha^2 \sigma_\epsilon^2 - \sigma_\epsilon^4]} {4\sigma_\epsilon^4 ( \sigma_\epsilon^2 + T \sigma_\alpha^2 )^3}  \times  MT(\sigma_\epsilon^2+T\sigma_\alpha^2)     \mbox{, using \eqref{s2}}\\
			&=  \frac{T M  (1+\gamma) }{4 \sigma_\epsilon^4 ( \sigma_\epsilon^2 + T \sigma_\alpha^2 )^2}  \Big[  2(T+1)  \sigma_\epsilon^4 + (2T^2  + T +3) \sigma_\epsilon^2 \sigma_\alpha^2   + 3(T^2 -T +1) \sigma_\alpha^4 \Big] \\
			& \hspace{1cm} - \frac{3MT^2\sigma_\alpha^2 (2\sigma_\epsilon^2 + T \sigma_\alpha^2 )} {4\sigma_\epsilon^4 (1+\gamma) ( \sigma_\epsilon^2 + T \sigma_\alpha^2 )^2}  	 - \frac{M T^2  \left[  (T-1) \sigma_\alpha^2 +  \sigma_\epsilon^2 \right]} {2\sigma_\epsilon^2 ( \sigma_\epsilon^2 + T \sigma_\alpha^2 )^2} .
		\end{split}
	\end{equation}

	\subsection{Integrals for $\bJ^{(i)}$}
	
	\begin{equation}
		\begin{split}
			\int_{y_{i}} (y_{it} - & x_{it}\beta)(y_{it'} -  x_{it'}\beta)  f_\theta^{1+\gamma}(y_{i} | x_i) dy_{i} \\
			&= \int_{z_i} z_{it}z_{it'} f_\theta^{1+\gamma}(z_i | 0) dz_i \mbox{,  where } f_\theta(z_i|0) \mbox{ is } N_T(0, \bO)  \\
			&=  (2\pi)^{-\frac{T \gamma}{2}} |\bO |^{-\frac{\gamma}{2}}  \int_{z_i} z_{it}z_{it'} (2\pi)^{-\frac{T}{2}} |\bO |^{-\frac{1}{2}} \exp \left\{ -\frac{1+\gamma}{2} z_i' \bO^{-1} z_i \right\} dz_i\\
			&=  (2\pi)^{-\frac{T \gamma}{2}} |\bO |^{-\frac{\gamma}{2}} (1+\gamma)^{-\frac{T}{2}}  \int_{z_i} z_{it}z_{it'} (2\pi)^{-\frac{T}{2}} \left|\frac{\bO}{1+\gamma}\right| ^{-\frac{1}{2}} \exp \left\{ -\frac{1}{2} z_i' \left(\frac{\bO}{1+\gamma}\right)^{-1} z_i \right\} dz_i\\
			&=  (2\pi)^{-\frac{T \gamma}{2}} |\bO |^{-\frac{\gamma}{2}} (1+\gamma)^{-\frac{T+2}{2}} \bO_{tt'}\\
			& = M \bO_{tt'},
			\label{int_yy1}
		\end{split}
	\end{equation}
	where 
	\begin{equation}
		\begin{split}
			M &= (2\pi)^{-\frac{T \gamma}{2}} |\bO |^{-\frac{\gamma}{2}} (1+\gamma)^{-\frac{T+2}{2}} \\
			& =  (2\pi)^{-\frac{T \gamma}{2}}  (1+\gamma)^{-\frac{T+2}{2}} \left\{ \sigma_\epsilon^{2(T-1)} ( \sigma_\epsilon^2 + T \sigma_\alpha^2 ) \right\}^{-\frac{\gamma}{2}} \mbox{, using \eqref{inverse}}\\
			& = (2\pi)^{-\frac{T \gamma}{2}}  (1+\gamma)^{-\frac{T+2}{2}}  \sigma_\epsilon^{-\gamma(T-1)} ( \sigma_\epsilon^2 + T \sigma_\alpha^2 )^{-\frac{\gamma}{2}}. \label{M}
		\end{split}
	\end{equation} 
	For $t=t'$, combining \eqref{omega} and \eqref{int_yy1}, we get
	\begin{equation}
		\int_{y_{i}} (y_{it} -  x_{it}\beta)^2 f_\theta^{1+\gamma}(y_{i} | x_i) dy_{i} 
		= M ( \sigma_\epsilon^2  + \sigma_\alpha^2 ).
		\label{int_y2}
	\end{equation}
	For $t\neq t'$,  combining \eqref{omega} and \eqref{int_yy1}, we get
	\begin{equation}
		\int_{y_{i}} (y_{it} -  x_{it}\beta)(y_{it'}-  x_{it'}\beta) f_\theta^{1+\gamma}(y_{i} | x_i) dy_{i} 
		=    M  \sigma_\alpha^2 .
		\label{int_yy}
	\end{equation}
	
	\noindent
	For two integer $r$ and $s$, where $(r+s)$ is an odd number, we have
	\begin{equation}
		\begin{split}
			\int_{y_{i}} (y_{it} -&  x_{it}\beta)^r (y_{it} -  x_{it} \beta)^s  f_\theta^{1+\gamma}(y_{i} | x_i) dy_{i}\\
			&= \int_{z_i} z_{it}^r z_{it}^s f_\theta^{1+\gamma}(z_i | 0) dz_i \mbox{,  where } f_\theta(z_i|0) \mbox{ is } N_T(0, \bO)  \\
			&=0 \mbox{, using \eqref{s_sum_s2}}.
			\label{int_sr_ss}
		\end{split}
	\end{equation}
	
	\noindent
	Now
	\begin{equation}
		\begin{split}
			\int_{y_{i}}  f_\theta^{1+\gamma}(y_{i} | x_i) dy_{i} 
			&= \int_{z_i}  f_\theta^{1+\gamma}(z_i | 0) dz_i \mbox{,  where } f_\theta(z_i|0) \mbox{ is } N_T(0, \bO)   \\
			&=  (2\pi)^{-\frac{T \gamma}{2}} |\bO |^{-\frac{\gamma}{2}}  \int_{z_i}  (2\pi)^{-\frac{T}{2}} |\bO |^{-\frac{1}{2}} \exp \left\{ -\frac{1+\gamma}{2} z_i' \bO^{-1} z_i \right\} dz_i\\
			&=  (2\pi)^{-\frac{T \gamma}{2}} |\bO |^{-\frac{\gamma}{2}} (1+\gamma)^{-\frac{T}{2}}  \int_{z_i}  (2\pi)^{-\frac{T}{2}} \left|\frac{\bO}{1+\gamma}\right| ^{-\frac{1}{2}} \exp \left\{ -\frac{1}{2} z_i' \left(\frac{\bO}{1+\gamma}\right)^{-1} z_i \right\} dz_i\\
			&=  (2\pi)^{-\frac{T \gamma}{2}} |\bO |^{-\frac{\gamma}{2}} (1+\gamma)^{-\frac{T}{2}}\\
			& = M(1+\gamma) \mbox{, using }\eqref{M}.
			\label{int_f}
		\end{split}
	\end{equation}
	So
	\begin{equation}
		\begin{split}
			\int_{y_i} &\left\{ \sum_{t=1}^{T} (y_{it} - x_{it} \beta) \right\}^2  f_\theta^{1+\gamma}(y_i | x_i) dy_i \\
			& = \int_{z_i}  \left\{ \sum_{t=1}^T z_{it}  \right\}^2 f_\theta^{1+\gamma}(z_i | 0) dz_i \mbox{,  where } f_\theta(z_i|0) \mbox{ is } N_T(0, \bO)   \\
			&=  (2\pi)^{-\frac{T \gamma}{2}} |\bO |^{-\frac{\gamma}{2}}  \int_{z_i} \left\{ \sum_{t=1}^T z_{it}  \right\}^2 (2\pi)^{-\frac{T}{2}} |\bO |^{-\frac{1}{2}} \exp \left\{ -\frac{1+\gamma}{2} z_i' \bO^{-1} z_i \right\} dz_i\\
			&=  (2\pi)^{-\frac{T \gamma}{2}} |\bO |^{-\frac{\gamma}{2}} (1+\gamma)^{-\frac{T}{2}} \int_{z_i} \left\{ \sum_{t=1}^T z_{it}  \right\}^2 (2\pi)^{-\frac{T}{2}} \left|\frac{\bO}{1+\gamma}\right| ^{-\frac{1}{2}} \exp \left\{ -\frac{1}{2} z_i' \left(\frac{\bO}{1+\gamma}\right)^{-1} z_i \right\} dz_i\\   
			&=  M(1+\gamma) E\left(\left[\sum_{t=1}^T s_{it} \right]^2\right)  \mbox{, using \eqref{M}}\\
			&=  MT ( \sigma_\epsilon^2  + T \sigma_\alpha^2) \mbox{, using \eqref{Es2}}.
			\label{s2}
		\end{split}
	\end{equation}
	Similarly
	\begin{equation}
		\begin{split}
			\int_{y_i} \left\{ \sum_{t=1}^{T} (y_{it} - x_{it} \beta) \right\}^4  f_\theta^{1+\gamma}(y_i | x_i) dy_i 
			& =   M(1+\gamma) E\left(\left[\sum_{t=1}^T s_{it} \right]^4\right) \\
			&=  \frac{3MT^2}{(1+\gamma)} ( \sigma_\epsilon^2  + T \sigma_\alpha^2)^2 \mbox{, using \eqref{Es4}},
			\label{s4}
		\end{split}
	\end{equation}

	\begin{equation}
		\begin{split}
			\int_{y_i} \left\{ \sum_{t=1}^{T} (y_{it} - x_{it} \beta)^2 \right\}^2 f_\theta^{1+\gamma}(y_i | x_i) dy_i
			&= \int_{y_i} \left\{ \sum_{t=1}^{T} z_{it}^2 \right\}^2 f_\theta^{1+\gamma}(z_i | 0) dz_i \mbox{,  where } f_\theta(z_i|0) \mbox{ is } N_T(0, \bO)   \\
			& =  M E\left(\left[\sum_{t=1}^T s_{it}^2 \right]^2\right) \\
			&=  M ( T(T+2)  \sigma_\epsilon^4  +  2T(T+2)  \sigma_\epsilon^2\sigma_\alpha^2 + 3T^2 \sigma_\alpha^4 ),
			\label{int_square}
		\end{split}
	\end{equation}
	and
	\begin{equation}
		\begin{split}
			\int_{y_i} &(y_{it}  - x_{it} \beta)^2 \left\{ \sum_{t'=1}^{T} (y_{it'} - x_{it'} \beta) \right\}^2 f_\theta^{1+\gamma}(y_i | x_i) dy_i \\
			&= M(1+\gamma) E\Bigg(s_{it'}^2 \left[\sum_{t=1}^T s_{it} \right]^2\Bigg) \\
			&=  M(1+\gamma) \left\{ (T+2)  \sigma_\epsilon^4 + (T^2 + 2T +3) \sigma_\epsilon^2 \sigma_\alpha^2 + 3(T^2 -T + 1 ) \sigma_\alpha^4 \right\}  \mbox{, using \eqref{Es2s2}.
				\label{E_s2_s2} }
		\end{split}
	\end{equation}

	\subsection{Expectations for Integrals}
	
	Suppose $s_i \sim N_T \left(0, \frac{\bO}{1+\gamma}\right)$, then
	\begin{equation}
		\begin{split}
			V\left(\sum_{t=1}^T s_{it} \right) &=  \sum_{t=1}^T V( s_{it} ) +  \sum_{t \neq t'} cov( s_{it} s_{it'} )\\
			&=  \frac{1}{1+\gamma}\sum_{t=1}^T \bO_{tt} +  \frac{1}{1+\gamma} \sum_{t \neq t'} \bO_{tt'}\\
			& = \frac{1}{1+\gamma} \sum_{t=1}^T ( \sigma_\epsilon^2  +   \sigma_\alpha^2 )  +  \frac{1}{1+\gamma} \sum_{t \neq t'} \sigma_\alpha^2 \\
			& = \frac{T}{1+\gamma} ( \sigma_\epsilon^2  + T \sigma_\alpha^2) .
		\end{split}
	\end{equation}
	So $\sum_{t=1}^T s_{it} \sim N(0,\frac{T}{1+\gamma} ( \sigma_\epsilon^2  + T \sigma_\alpha^2)) $. Therefore
	\begin{eqnarray}
		E\left(\left[\sum_{t=1}^T s_{it} \right]^2\right) &=& \frac{T}{1+\gamma} ( \sigma_\epsilon^2  + T \sigma_\alpha^2),\label{Es2}\\
		E\left(\left[\sum_{t=1}^T s_{it} \right]^4\right) &=& \frac{3T^2}{(1+\gamma)^2} ( \sigma_\epsilon^2  + T \sigma_\alpha^2)^2, \label{Es4}
	\end{eqnarray}
	and
	\begin{equation}
		\begin{split}
			E\left(\left[  \sum_{t=1}^T s_{it}^2 \right]^2 \right) &=  E\left(  \sum_{t=1}^T s_{it}^4 + \sum_{t \neq t'} s_{it}^2 s_{it'}^2 \right)
			\\
			&= \sum_{t=1}^T 3 \bO^2_{tt} + \sum_{t \neq t'} ( \bO_{tt} \bO_{t't'} + 2 \bO_{tt'}^2)\\
			&=  3T \bO^2_{tt} + T(T-1) ( \bO_{tt}^2 + 2 \bO_{tt'}^2)\\
			&=   T(T+2) ( \sigma_\epsilon^2  +  \sigma_\alpha^2)^2 + 2T(T-1)  \bO_{tt'}^2\\
			&=   T(T+2)  \sigma_\epsilon^4  +  2T(T+2)  \sigma_\epsilon^2\sigma_\alpha^2 + 3T^2 \sigma_\alpha^4.
			\label{e_s2_2}
		\end{split}
	\end{equation}
	
	\noindent
	For $t' = 1, 2, \cdots, T$, we have
	\begin{equation}
		\begin{split}    
			E\Bigg(s_{it'}^2 &\left[\sum_{t=1}^T s_{it} \right]^2\Bigg) 
			=  E\left(s_{it'}^4 + s_{it'}^2  \sum_{t \neq t'} s_{it}^2 
			+s_{it'}^3  \sum_{t \neq t'} s_{it} + s_{it'}^2  \sum_{t \neq t' \neq t''} s_{it} s_{it''} \right) \\
			&=  3 \bO_{t't'}^2 + \left(\bO_{t't'} \sum_{t \neq t'} \bO_{tt} + 2 \sum_{t \neq t'} \bO_{tt'}^2 \right)
			+ 3 \bO_{t't'} \sum_{t \neq t'} \bO_{tt'} + \left(\bO_{t't'}  \sum_{t \neq t' \neq t''} \bO_{tt''}  + 2 \sum_{t \neq t' \neq t''} \bO_{tt'} \bO_{t't''}  \right)\\
			&=  3 \bO_{t't'}^2 + (T-1) \bO_{t't'}^2  + 2 (T-1) \bO_{tt'}^2 + 3(T-1) \bO_{t't'} \bO_{tt'}\\
			& \hspace{2cm}
			+ (T-1) (T-2)(\bO_{t't'} \bO_{tt'} + 2\bO_{tt'}^2)  \mbox{ for } t\neq t'\\
			&=  (T+2) ( \sigma_\epsilon^2  +  \sigma_\alpha^2)^2 + 2 (T-1) \sigma_\alpha^4 + 3 (T-1) \sigma_\alpha^2  ( \sigma_\epsilon^2  +  \sigma_\alpha^2) \\
			& \hspace{2cm}
			+ (T-1) (T-2)(\sigma_\epsilon^2  \sigma_\alpha^2 + \sigma_\alpha^4 + 2\sigma_\alpha^4),   \mbox{ using \eqref{omega}}\\
			&=  (T+2)  \sigma_\epsilon^4 + (T^2 + 2T +3) \sigma_\epsilon^2 \sigma_\alpha^2 +  3(T^2 -T + 1 ) \sigma_\alpha^4 .
			\label{Es2s2}
		\end{split}
	\end{equation}
	
	\noindent
	Similarly
	\begin{equation}
		\begin{split}    
			E\Bigg(s_{it'} &\left[\sum_{t=1}^T s_{it} \right]^2\Bigg) 
			=  E\left(s_{it'}^3 + s_{it'}^2  \sum_{t \neq t'} s_{it} 
			+ s_{it'}  \sum_{t \neq t' \neq t''} s_{it} s_{it''} \right) = 0.
			\label{s_sum_s2}
		\end{split}
	\end{equation}

	\subsection{Oman Weather Stations} \label{sec:weather_station}
	
	Oman weather dataset consists observations from 55 stations across Oman over the period January 2018 to December 2018. The list of weather stations is given in Table \ref{tab:stations}.
	
	\begin{table}[htbp] \small
		\centering
		\tabcolsep 0.1in
		\caption{Stations for the monthly Oman weather data.}
		\begin{tabular}{@{}llllll@{}}
			\toprule
			Station & Station & Station & Station & Station \\
			\midrule
			Adam			& Diba           & Madha 	                     & Nizwa              & Shalim \\
			Al Amrat        & Fahud Airport & Mahdah   	                 & Qairoon Hairiti    & Shinas \\
			Al Jazir		& Haima	         & Majis 	                     & Qalhat             & Sohar Airport \\
			Al Khaboura		& Ibra			 & Marmul Airport 		         & Qarn alam          & Sunaynah  \\
			Al Mudhaibi     & Ibri	         & Masirah                       & Qurayyat           & Suwaiq \\
			Al Qabil 	    & Ibri New 		 & Mina Salalah 	             & RasAlHaad          & Taqah \\
			Al-Buraymi      & Izki  	     & Mina Sultan Qaboos            & Sadah              & Thamrayt \\
			Bidiyah 	    & Joba	         & Mirbat			             & Saham              & Wadi Bani Khalid \\          
			Buhla 			& Khasab Airport & Muqshin			             & Saiq Airport      & Yalooni \\                                  			   
			Bukha 			& Khasab Port	 & Muscat City			         & Salalah Airport   & Yalooni Airport \\                                  			
			Dhank (Qumaira) & Liwa      	 & Muscat International Airport	 & Samail             & Yanqul \\                                  			                             
			\bottomrule
		\end{tabular}
		\label{tab:stations}
	\end{table}

\end{document}